\documentclass[11pt]{article}

\usepackage[letterpaper,margin=1in]{geometry}
\usepackage{setspace}
\singlespacing

\usepackage{amsmath}
\usepackage{amsthm}
\usepackage{amssymb}
\usepackage{algorithm}
\usepackage{algpseudocodex}
\usepackage{complexity}
\usepackage{enumitem}
\usepackage[hidelinks]{hyperref}
\makeatletter
\AddToHook{env/algorithmic/before}{\def\@currentcounter{ALG@line}}
\makeatother
\usepackage{cleveref}
\crefalias{ALG@line}{line}
\usepackage{thm-restate}

\usepackage[round]{natbib}

\definecolor{CoalitionColor}{HTML}{1D42A6}

\usepackage{tikz}
\usetikzlibrary{calc}
\usetikzlibrary{decorations.pathreplacing}
\tikzstyle{agent}=[draw, shape=circle, minimum size=30pt]
\tikzstyle{smallagent}=[draw, shape=circle, minimum size=15pt]
\tikzstyle{aux}=[draw, shape=circle, minimum size=5pt]

\newcommand{\convexpath}[2]{
  [   
  create hullcoords/.code={
    \global\edef\namelist{#1}
    \foreach [count=\counter] \nodename in \namelist {
      \global\edef\numberofnodes{\counter}
      \coordinate (hullcoord\counter) at (\nodename);
    }
    \coordinate (hullcoord0) at (hullcoord\numberofnodes);
    \pgfmathtruncatemacro\lastnumber{\numberofnodes+1}
    \coordinate (hullcoord\lastnumber) at (hullcoord1);
  },
  create hullcoords
  ]
  ($(hullcoord1)!#2!-90:(hullcoord0)$)
  \foreach [
  evaluate=\currentnode as \previousnode using \currentnode-1,
  evaluate=\currentnode as \nextnode using \currentnode+1
  ] \currentnode in {1,...,\numberofnodes} {
    let \p1 = ($(hullcoord\currentnode) - (hullcoord\previousnode)$),
    \n1 = {atan2(\y1,\x1) + 90},
    \p2 = ($(hullcoord\nextnode) - (hullcoord\currentnode)$),
    \n2 = {atan2(\y2,\x2) + 90},
    \n{delta} = {Mod(\n2-\n1,360) - 360}
    in 
    {arc [start angle=\n1, delta angle=\n{delta}, radius=#2]}
    -- ($(hullcoord\nextnode)!#2!-90:(hullcoord\currentnode)$) 
  }
}

\newcommand{\agS}{N} 
\newcommand{\fracmat}{\zeta} 

\newcommand{\uf}{u} 
\newcommand{\vf}{v} 
\newcommand{\wf}{v} 

\newcommand{\SW}{\mathcal{SW}}
\newcommand{\EV}{\mathbb{E}}
\newcommand{\EVA}[1]{\EV} 
\newcommand{\prob}{\mathbb{P}}

\newcommand{\alg}{\mathit{ALG}}
\newcommand{\beste}{\mathit{ALG}^*}
\newcommand{\gentreealg}{\alg}
\newcommand{\cmax}{c^*} 
\newcommand{\ctree}{\hat c}

\DeclareMathOperator{\I}{\chi} 

\newtheorem{theorem}{Theorem}[section]
\newtheorem{corollary}[theorem]{Corollary}
\newtheorem{proposition}[theorem]{Proposition}
\newtheorem{lemma}[theorem]{Lemma}

\theoremstyle{definition}

\newcommand{\recursepara}{k}

\newcommand{\pmatched}{h} 
\newcommand{\pmatchbest}{r}

\newcommand{\pmatchedi}[1]{h} 
\newcommand{\pmatchbesti}[1]{r} 
\newcommand{\hrinput}{S} 
\newcommand{\hrinputpara}[1]{S[N\setminus\{#1\}]} 
\newcommand{\starins}{\mathcal S} 
\newcommand{\biins}{\mathcal B} 

\newcommand{\vfd}{\vf} 
\newcommand{\vb}{\left(\frac{1}{\epsilon}\right)}

\newcommand{\starmax}{t_S} 
\newcommand{\bstarmax}{t_B} 

\title{The Power of Matching for Online Fractional Hedonic Games}

\author{Martin Bullinger\\
\small University of Bristol, UK\\
\texttt{\small martin.bullinger@bristol.ac.uk}
\and
René Romen
and Alexander Schlenga\\
\small Technical University of Munich, Germany\\
\texttt{\small \{rene.romen,alexander.schlenga\}@tum.de}
}

\date{}

\allowdisplaybreaks 

\begin{document}

\maketitle

\begin{abstract}
	We study coalition formation in the framework of fractional hedonic games (FHGs).
	The objective is to maximize social welfare in an online model where agents arrive one by one and must be assigned to coalitions immediately and irrevocably.
	A recurrent theme in online coalition formation is that online matching algorithms, where coalitions are restricted to size at most~$2$, yield good competitive ratios.
	For example, computing maximal matchings achieves the optimal competitive ratio for general online FHGs.
	However, this ratio is bounded only if agents' valuations are themselves bounded.
	
	We identify optimal algorithms with constant competitive ratios in two related settings, independent of the range of agent valuations.
	First, under random agent arrival, we present an asymptotically optimal $(\frac{1}{3}-\frac 1n)$-competitive algorithm, where $n$ is the number of agents.
	This result builds on our identification of an optimal matching algorithm in a general model of online matching with edge weights and an unknown number of agents.
	In this setting, we also achieve an asymptotically optimal competitive ratio of $\frac{1}{3}-\frac 1n$.
	Second, when agents arrive in an arbitrary order but algorithms are allowed to irrevocably and entirely dissolve coalitions, we show that another matching-based algorithm achieves an optimal competitive ratio of $\frac{1}{6 + 4\sqrt{2}}$.
	\end{abstract}

\section{Introduction}

We study an online partitioning problem in which a set of agents arrives one by one and must be assigned to unique coalitions.
This setting naturally generalizes online matching by allowing coalitions of arbitrary size, and, therefore, poses fundamental combinatorial challenges for algorithm designers.
At the same time, it has many appealing applications at the intersection of economics, artificial intelligence, and the social sciences, ranging from assigning employees to projects or offices, to grouping students into project teams, patients into support groups, or organizing collaboration among robotic swarms \citep[for further applications, see][]{Ray07a, AzSa15a}.
Dependent on the context, agents may represent individuals in a society or, more broadly, firms or computer programs.
A widely studied framework for modeling coalition formation is that of \emph{hedonic games} \citep{DrGr80a, BoJa02a}, in which each agent has preferences over the coalitions they belong to.
The defining feature of hedonic games is that preferences depend solely on the members of an agent's own coalition, and not on how other agents are grouped.

However, even under this natural restriction, explicitly specifying preferences requires considering an exponentially large set of potential coalitions.
Hence, for the sake of computational tractability, much of the literature has focused on hedonic games with inherently concise preference representations.
One common approach derives an agent's preferences over coalitions from their opinion on single agents.
For example, agents may assign a subjective cardinal valuation to each other agent, which is then aggregated to determine utilities over coalitions.
This idea gives rise to, among others, the well-studied classes of additively separable (ASHGs) and fractional (FHGs) hedonic games \citep{BoJa02a, ABB+17a}.
In this work, we focus on FHGs, where an agent's utility for a coalition is the average of the valuations they assign to its members (assuming a self-valuation of~$0$).
\citet{ABB+17a} argue that this model is well-suited for analyzing network clustering and use it to represent basic economic scenarios such as the bakers-and-millers game.
Moreover, \citet{MDTB21a} apply FHG utilities in a peer-to-peer energy market, showing that the model leads to decent profits.

An important aspect of real-world coalition formation processes is that agents arrive over time.
This has motivated the study of an online model of hedonic games by \citet{FMM+21a}.
In their basic model, agents arrive one by one and have to be assigned to existing coalitions of any size immediately and irrevocably.
The objective is to achieve high social welfare, defined as the sum of agents' utilities.
Unfortunately, this is a demanding objective in FHGs: if $V_{\min}$ and $V_{\max}$ are the minimum and maximum permitted absolute value of nonzero valuations, the best possible competitive ratio is $\frac {V_{\min}}{4V_{\max}}$.

A crucial role in achieving welfare approximations, whether in an offline or online setting of coalition formation, has been to employ matchings.\footnote{A notable exception are online FHGs with nonnegative weights, for which the optimal algorithm forms coalitions of unbounded size \citep{FMM+21a}.} 
For instance, the aforementioned competitive ratio is attained by forming maximal matchings, which is even the best deterministic approach for unweighted games \citep{FMM+21a}.
Moreover, the best known polynomial-time approximation algorithm for social welfare in offline FHGs, achieving a $2$-approximation, is to form a maximum weight matching \citep{FKMZ21a}.
Similarly, in the related model of ASHGs, maximum weight matchings achieve an $n$-approximation of social welfare, where $n$ is the number of agents.
At the same time, an $n^{1-\epsilon}$-approximation is \NP-hard to compute for any $\epsilon > 0$ \citep{FKV22a}, even if $V_{\min} = V_{\max} = 1$ \citep{BCS25a}.
Our work extends this intuition by considering two more sophisticated models of online FHGs, where we show that online matching algorithms achieve a constant optimal competitive ratio.

In the first model, the random arrival setting, 
agents arrive in a uniformly random instead of an adversarial order, as in the secretary problem \citep{Ferg89a}.
Hence, an algorithm has to compete well against an adversary who only fixes the game but not the precise arrival order.
Notably, this avoids the worst-case example by \citet{FMM+21a}, which crucially relies on specifying valuations based on the previous decisions of algorithms.
We achieve a $(\frac 13- \frac 1n)$-competitive algorithm, while no algorithm can be better than $\frac 13$-competitive.
The algorithm we present is based on an online matching algorithm for a general setting of weighted matching where all agents arrive online but the number of agents is unknown.
In the matching domain, we obtain the same asymptotically optimal competitive ratio. 
The tightness of both competitive radii follows a unified approach:
We establish an upper bound for the competitive ratio of any online matching algorithms on the tree domain, a specific domain of instances where positive valuations form trees.
We then show that upper bounds on this domain directly transfer to the coalition formation setting.

In the second model, the free dissolution setting, the algorithm has to perform well under any arrival order, but it gets the additional power to dissolve coalitions.
This setting is inspired by free edge dissolution in the matching domain \citep{FKM+09a} and has been introduced to coalition formation by \citet{BuRo25b}.
We show that another matching algorithm achieves the optimal competitive ratio of $\frac{1}{6+4\sqrt{2}}$, which is a factor $\frac 12$ worse than the best online algorithm in the corresponding matching domain.

\section{Related Work}

The hedonic formation of coalitions traces back to \citet{DrGr80a}, while hedonic games in the form studied today have been conceptualized by \citet{BoJa02a}.
The latter paper introduces the class of ASHGs, in which utilities for coalitions are obtained through a sum-based aggregation of individual valuations. 
Fractional hedonic games were introduced later by \citet{ABB+17a}.
An overview of hedonic games can be found in the book chapters by \citet{AzSa15a} and \citet{BER24a}.

Several authors studied various notions of stability in FHGs \citep{BBS14a,BFF+15a,BFFMM18a,KKP16a,ABB+17a,BrBu20a}, where the goal is outcomes in which agents cannot perform beneficial deviations to a more preferred coalition.
\citet{AGG+15b} consider welfare maximization.
In addition to examining algorithms for (utilitarian) social welfare, they consider the maximization of egalitarian and Nash welfare.
They prove \NP-hardness of finding optimal partitions for the different
objectives and give polynomial-time approximation algorithms for utilitarian and egalitarian welfare.
Matching algorithms are shown to yield good approximation ratios.
In particular, \citet{AGG+15b} show that a maximum weight matching (MWM) is a
$\frac{1}{4}$-approximation of social welfare in 
FHGs.
This analysis was later improved and made tight by \citet{FKMZ21a} who prove that MWMs yield precisely a
$\frac{1}{2}$-approximation.

An online model for hedonic games was first studied by \citet{FMM+21a}, who consider FHGs and ASHGs.\footnote{
While being inspired by models of hedonic games, \citet{FMM+21a} develop their model as a ``coalition structure generation problem'' and, therefore, adopt a purely graph-theoretic instead of a game-theoretic perspective.
}
They investigate the model where agents arrive in an adversarial order
and give lower and upper bounds for deterministic algorithms on
the achievable competitive ratio for maximizing social welfare.
Except for simple FHGs, their results are rather discouraging because the competitiveness crucially depends on the range of valuations.
For ASHGs, \citet{BuRo25b} consider the random arrival and the free dissolution models and show that these dependencies vanish.
We achieve similar results for FHGs.
Furthermore, going beyond welfare maximization, \citet{BuRo25a}
study stability and Pareto optimality
for online ASHGs with adversarial agent arrival.

There is a vast body of literature on online matching.
A recent survey is given by \citet{HTW24a}.
Here, we only discuss the works that are closest to our setting.
For unweighted graphs, \citet{GKM+19a} give the online algorithm with the currently best known competitive ratio for maximum cardinality matchings with adversarial vertex arrival.
\citet{KRTV13a} study MWMs with random vertex arrival on one side of bipartite graphs and show
that the upper bound of~$\frac{1}{e}$, which stems from the fact that the scenario generalizes the secretary problem,
can be matched by an algorithm.
\citet{EFGT22a} present an algorithm for approximating an MWM in general weighted graphs with random vertex arrival
where the total number of vertices to arrive is known in advance.
They also show the asymptotic tightness of that algorithm's competitive ratio by considering a family of graphs where all edge weights
differ by a large factor, so there is only one valuable edge for a matching.
Our optimal matching algorithm considers their setting with an unknown number of agents.
Finally, \citet{FKM+09a} introduce free disposal of edges in online matching. 
They determine the optimal competitive ratio of $1-\frac 1e$ for the bipartite setting when one side of the agents is present offline.  
\citet{BuRo25b} extend free disposal to the notion of free dissolution in coalition formation that we study.

\section{Preliminaries and Model}

We begin by introducing some 
notation.
For $i\in\mathbb{N}$, we denote $[i] := \{1,\dots,i\}$.
Next, for a graph $G=(V, E)$ and a set of vertices $S\subseteq V$, let $G[S]$ denote the subgraph of $G$ induced by~$S$.
Finally, we denote the indicator function by $\I(\cdot)$.
It takes a Boolean argument as an input and returns~$1$ if it is true and~$0$, otherwise.

\subsection{Hedonic Games}
Let $N$ be a finite set of \emph{agents}. 
A nonempty subset $C\subseteq N$ is called a \emph{coalition}.
The set of coalitions containing agent~$i\in N$ is denoted by 
$\mathcal N_i:=\{C\subseteq N\mid i\in C\}$.
A set $\pi$ of disjoint coalitions containing all members of $N$ is a \emph{partition} of $N$.
For agent $i\in N$ and partition $\pi$, let $\pi(i)$ denote the unique coalition in~$\pi$ that~$i$ belongs to.
Moreover, for a subset of agents $N'\subseteq N$, we define $\pi[N']$ as the \emph{partition restricted to $N'$} as $\pi[N'] := \{C\cap N'\mid C\in \pi, C\cap N'\neq \emptyset\}$. 

A (cardinal) \emph{hedonic game} is a pair $G = (N,\uf)$ where $N$ is the set of agents and $\uf = {(\uf_i)}_{i\in N}$ is a tuple of \emph{utility functions} $u_i\colon \mathcal N_i \to \mathbb Q$.
Agents seek to maximize utility and prefer partitions in which their coalition achieves a higher utility.
Hence, we define the utility of a partition~$\pi$ for agent~$i$ as $\uf_i(\pi) := \uf_i(\pi(i))$.
We denote by $n(G):= |N|$ the number of agents and write $n$ if $G$ is clear from the context.

Following \citet{ABB+17a}, a \emph{fractional hedonic game} (FHG) is a hedonic game $(N,\uf)$, where for each agent $i\in N$ there exists a \emph{valuation function} $\vf_i\colon N\setminus\{i\}\to \mathbb Q$ such that for all $C\in \mathcal N_i$ it holds that $\uf_i(C) = \sum_{j\in C\setminus \{i\}}\frac{\vf_i(j)}{|C|}$.
Note that this implies that the utility for a singleton coalition is~$0$.
Since the valuation functions contain all information for computing utilities, we also represent an FHG as the pair $(N,\vf)$, where $\vf = {(\vf_i)}_{i\in N}$ is the tuple of valuation functions.
Additionally, an FHG can be succinctly represented as a complete directed weighted graph where the weights of directed edges induce the valuation functions.

An FHG $(N,\vf)$ is said to be \emph{symmetric} if for every pair of distinct agents $i,j\in N$, it holds that $\vf_i(j) = \vf_j(i)$.
We write $\vf(i,j)$ for the symmetric valuation between~$i$ and~$j$.
A complete undirected weighted graph can represent a symmetric FHG\@.
For simplicity, we also denote this graph by $(N,\vf)$.
Moreover, an FHG is said to be \emph{simple} if for every pair of distinct agents $i,j\in N$, it holds that $\vf_i(j)\in \{0,1\}$. 
Simple FHGs can be represented by
directed unweighted graphs (where edges represent valuations of~$1$).
Finally, a symmetric FHG is said to belong to the
\emph{tree domain} if every connected component of the edges with positive weight in the
associated undirected graph forms a tree,
and every other edge has a negative weight smaller than
the negative sum of all positive edge weights.

We measure the desirability of a partition in terms of social welfare.
Given an FHG $G = (N,\vf)$, we define the \emph{social welfare} of a coalition $C\subseteq N$ in $G$ as $\SW_G(C) := \sum_{i\in C}\uf_i(C)$ and of a
partition $\pi$ in $G$ as $\SW_G(\pi) := \sum_{i\in N}\uf_i(\pi) = \sum_{C\in \pi}\SW_G(C)$.
We omit the subscript $G$ if the game is clear from the context.
Also, we denote by $\pi^*(G)$ a partition that maximizes social welfare in $G$.
Note that we can replace both $\vf_i(j)$ and $\vf_j(i)$ by $\frac 12 (\vf_i(j)+\vf_j(i))$ for all $i,j\in N$, which results in a symmetric FHG in which the social welfare of every partition remains the same \citep{Bull19a}.
Hence, it suffices to consider symmetric FHGs instead of the full domain of FHGs.
However, note that this technique cannot be applied to simple FHGs (or other restricted classes of FHGs) as the symmetrization may result in nonsimple FHGs.
Given $c\le 1$, a partition $\pi$ is called a \emph{$c$-approximation} to social welfare in $G$ if $\SW(\pi)\ge c\cdot \SW(\pi^*(G))$.

\subsection{Matching}

A \emph{matching} is a partition in which all coalitions have size at most~$2$.\footnote{In contrast to the standard definition of a matching, we assume that unmatched agents are part of a matching in the form of singleton coalitions.}
For a matching $\pi$, we denote by $\vf(\pi) = \sum_{e = \{i,j\}\in \pi, |e| = 2} \vf(e)$ its \emph{weight}. 
We have that $\SW(\pi) = \vf(\pi)$ because, for each matched pair, both agents contribute $\frac 12$ of the edge weight to the social welfare.
Hence, maximizing social welfare among matchings is precisely the \emph{maximum weight matching} (MWM) problem.

Assume that we are given a complete unweighted graph $G = (N,\vf)$.\footnote{The below definitions also work for incomplete graphs which we complete by adding edges with weight~$0$.}
A \emph{fractional matching} is a function $\fracmat\colon V \times V \to [0,1]$ such that, for all $i,j\in N$, we have that $\fracmat(i,j) = \fracmat(j,i)$ and $\sum_{j\in N} \fracmat(i,j) = 1$.
The value $\fracmat(i,j)$ is interpreted as the probability of matching $i$ with $j$, where we think of $\fracmat(i,i)$ as the probability that~$i$ remains unmatched.
This interpretation makes sense because of the second constraint and we can extract the probability distribution of matching to neighbors as follows.
For $i\in N$, we define $\fracmat(i)\colon V \to [0,1]$ by $\fracmat(i)(j) = \fracmat(i,j)$.

Given a fraction matching $\fracmat$, its weight is defined by $\vf(\fracmat) :=\sum_{i,j\in N, i\neq j} \fracmat(i,j)\vf(i,j)$.
We use fractional matchings as a computational tool as a subroutine for one of our algorithms.
For this, note that the problem of computing a \emph{maximum weight fractional matching} (MWFM) is well known to be solvable in polynomial time \citep[see, e.g.,][]{Schr03a}. 

\subsection{Online Models and Competitive Analysis}

We assume an online model of FHGs where agents arrive one by one and have to be assigned to new or existing coalitions.
For an agent set $N$, define $\Sigma(N) := \{\sigma\colon [|N|]\to N \textnormal{ bijective}\}$.
This is interpreted as the set of all \emph{arrival orders}.

An instance $(G,\sigma)$ of an \emph{online FHG} consists of an FHG $G = (N,\vf)$ and an arrival order $\sigma\in \Sigma(N)$.
An online coalition formation algorithm $\alg$ produces on input $(G,\sigma)$ a sequence ${\alg(G,\sigma)}_1$, \dots, ${\alg(G,\sigma)}_{n(G)}$ of partitions, where for each $i\in [n(G)]$, ${\alg(G,\sigma)}_i$ is a partition of $\{\sigma(1),\dots, \sigma(i)\}$.
Hence, the partial partitions have to contain precisely the agents that have arrived so far.
Moreover, we require that for all input tuples $(G,\sigma)$ and $(H,\tau)$ and $k\in \mathbb N$ with $k\le \min \{n(G),n(H)\}$ it holds that
${\alg(G,\sigma)}_k = {\alg(H,\tau)}_k$ whenever $\vf_{\sigma(i)}(\sigma(j)) = \vf_{\tau(i)}(\tau(j))$ for all $i,j\in [k]$.\footnote{We later consider randomized algorithms, for which the produced random partition has to be identical.}
This condition says that the algorithmic decision to form the $k$th partition can only depend on the information the algorithm has obtained until the $k$th agent arrives.
In particular, it cannot depend on the knowledge about agents arriving in the future.
Furthermore, this condition implies that decisions must be identical if all valuations are identical up to a certain agent's arrival.
The output of the algorithm is the partition produced when the final agent is added; we denote $\alg(G,\sigma):= {\alg(G,\sigma)}_{n(G)}$.

In addition, an algorithm's decisions are assumed to be irrevocable, i.e., agents can only be added to an existing or a completely new coalition, while not changing the existing coalitions.
Formally, this means that for all instances $(G,\sigma)$ and $2\le k\le n(G)$, we require that ${\alg(G,\sigma)}_k[\{\sigma(i)\mid 1\le i\le k-1\}] = {\alg(G,\sigma)}_{k-1}$, i.e., the $(k-1)$st partition is the $k$th partition restricted to the first $k-1$ agents.
An algorithm may, however, have the additional power to dissolve a partition before adding a new agent.
In this case, we say that the algorithm operates under \emph{free dissolution} and additionally allow that ${\alg(G,\sigma)}_k[\{\sigma(i)\mid 1\le i\le k-1\}]$ is of the form $({\alg(G,\sigma)}_{k-1}\setminus C) \cup \{\{i\}\mid i \in C\}$ for some $C\in {\alg(G,\sigma)}_{k-1}$.

The objective is to achieve high welfare. 
We say that $\alg$ is \emph{$c$-competitive}\footnote{We use the convention that $\frac 00 = 1$ and $\frac x0 = 0$ for any $x\in \mathbb Q$ with $x < 0$.
} if
\[
\inf_G\min_{\sigma\in \Sigma(N)}\frac{\SW(\alg(G,\sigma))}{\SW(\pi^*(G))}\ge c\text.
\]

Equivalently, this means that for all instances $(G,\sigma)$, $\alg$ produces a $c$-approximation of social welfare.
Hence, we benchmark algorithms against a worst-case adversary that can both fix an instance, i.e., the number of agents and their mutual valuations, as well as an exact arrival order.

In addition, we consider a model where the agents arrive in a \emph{uniformly random} arrival order. 
The objective is then to achieve high welfare in expectation.
We denote by $\alg(G)$ the random partition produced with respect to a uniformly random arrival order.
An algorithm $\alg$ is said to be \emph{$c$-competitive under random arrival} if
\[
\inf_G\frac{\EV_{\sigma\sim\Sigma(N)}\left[\SW(\alg(G))\right]}{\SW(\pi^*(G))}\ge c\text.
\]
Hence, in this model, an algorithm is benchmarked against an adversary that can design a worst-case instance, but has no control over the exact arrival order of the agents.
We omit the subscript from the notation of expected values when it is clear from the context.
In both models, the \emph{competitive ratio} $c_\alg$ of $\alg$ is the supremum $c$ such that $\alg$ is $c$-competitive.
Note that the competitive ratio is always at most~$1$.

We also consider randomized algorithms, which can use randomization to decide which coalition an agent should be added to.
In this case, the competitive ratio is measured with respect to the expected social welfare of the random partition constructed by the randomized algorithm.

The competitive ratio is also defined for subclasses of FHGs, such as simple and symmetric FHGs, where the infimum is only taken over games from that subclass.
Finally, the competitive ratio is also defined for online matching algorithms, for which the weight of the matching produced by an algorithm is compared with the weight of an MWM\@.

\section{Connections between Matching and FHGs}\label{sec:connections}

A first significant connection between MWMs and welfare maximization in FHGs is that the former yields a $\frac 12$-approximation for the latter.
In \Cref{app:prelimresults}, we provide an instructive alternative proof of this theorem, which was first shown by \citet{FKMZ21a}.
Our argument establishes a connection between a uniform fractional matching, where each edge is included with probability~$\frac 1n$ and the social welfare of a coalition.
The result then follows as the weight of a uniform fractional matching is a lower bound on the weight of an MWM\@.

\begin{restatable}{theorem}{matchingguarantee}[\citet{FKMZ21a}]\label{thm:matchingguarantee}
	Every MWM is a $\frac 12$-approximation of social welfare in symmetric FHGs. 
\end{restatable}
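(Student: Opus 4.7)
The plan is to follow the roadmap sketched immediately before the theorem statement. First, I would record two identities valid in any symmetric FHG. For a coalition $C$ with $|C|\geq 2$, double-counting unordered pairs yields
\[
\SW(C)=\frac{2}{|C|}\sum_{\{i,j\}\in \binom{C}{2}}\vf(i,j),
\]
while for any matching $M$, $\SW(M)$ coincides with the total edge weight of $M$, as already noted in the preliminaries.

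The crux is the following lemma, which I would establish next: for every coalition $C$, a uniformly random maximum matching $M_C$ of $C$ (perfect if $|C|$ is even, near-perfect with a uniformly random unmatched agent if $|C|$ is odd) satisfies $\EV[\SW(M_C)]\geq \tfrac{1}{2}\SW(C)$. Under the uniform distribution on maximum matchings of $C$, relabelings of $C$ act transitively on $\binom{C}{2}$, so every pair is matched with the same probability $p_{|C|}$. A short double counting of (matching, contained pair) incidences, or a direct calculation, gives $p_{|C|}=\frac{1}{|C|-1}$ for even $|C|$ and $p_{|C|}=\frac{1}{|C|}$ for odd $|C|$. Combining this with the identity above,
\[
\EV[\SW(M_C)] = p_{|C|}\sum_{\{i,j\}\in \binom{C}{2}}\vf(i,j) = \frac{|C|\,p_{|C|}}{2}\,\SW(C),
\]
and the coefficient $\frac{|C|\,p_{|C|}}{2}$ equals $\frac{|C|}{2(|C|-1)}$ in the even case and exactly $\tfrac{1}{2}$ in the odd case, so it is at least $\tfrac{1}{2}$ either way.

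Finally, I would assemble the theorem. Let $\pi^*(G)$ be a welfare-maximal partition, and for each $C\in \pi^*(G)$ independently draw a uniformly random maximum matching $M_C$ of $C$. Their union $M:=\bigcup_{C\in \pi^*(G)}M_C$ is a matching of $N$, and by linearity of expectation,
\[
\EV[\SW(M)] = \sum_{C\in \pi^*(G)}\EV[\SW(M_C)] \geq \tfrac{1}{2}\sum_{C\in \pi^*(G)}\SW(C) = \tfrac{1}{2}\,\SW(\pi^*(G)).
\]
Since every realization of $M$ has weight at most that of an MWM, the MWM weight, which equals its social welfare, is also at least $\tfrac{1}{2}\SW(\pi^*(G))$, giving the claimed bound.

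The step I expect to require the most care is the computation of $p_{|C|}$, and in particular a uniform treatment of the odd-cardinality case where one vertex of $C$ is unmatched. Handling it via the symmetry and double-counting argument outlined above avoids any explicit enumeration of matchings and, pleasingly, makes the $\tfrac{1}{2}$-approximation bound tight precisely when the optimal partition contains odd-sized coalitions, where the inner ratio is exactly $\tfrac{1}{2}$.
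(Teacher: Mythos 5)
Your proposal is correct and follows essentially the same route as the paper: both arguments bound the optimum coalition-by-coalition by averaging over matchings inside each $C\in\pi^*(G)$, then union these into a single matching dominated by the MWM. The only cosmetic difference is that the paper invokes the folklore inequality $\frac{1}{|C|}\sum_{\{i,j\}\subseteq C}\vf(i,j)\le\vf(\mu^*(C))$ directly (i.e.\ the ``uniform fractional matching'' with edge probability $\frac{1}{|C|}$), whereas you realize the same bound by computing the marginal probability $p_{|C|}$ under a uniformly random maximum matching; do note that your standalone lemma $\EV[\SW(M_C)]\ge\frac12\SW(C)$ requires $\SW(C)\ge 0$ when $|C|$ is even (the coefficient $\frac{|C|}{2(|C|-1)}$ exceeds $\frac12$, which flips the inequality for negative welfare), though this is harmless here since every coalition in a welfare-optimal partition has nonnegative welfare.
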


This implies the same guarantee for online algorithms: $c$-competitive online matching algorithms are $\frac c2$-competitive for online FHGs.
We can use this insight to make an interesting observation:  it is known that no deterministic online algorithm can achieve a competitive ratio of better than $\frac 14$ for simple symmetric FHGs \citep{FMM+21a}.
However, there exists a \emph{randomized} online matching algorithm for MWM on unweighted graphs (i.e., maximum cardinality matching) that beats a competitive ratio of $\frac 12$ \citep{GKM+19a}, i.e., achieves a competitive ratio of $\frac 12 + 2\epsilon^*$ for some constant $\epsilon^* > 0$.
We can apply \Cref{thm:matchingguarantee} to conclude that randomization can be utilized to beat the best deterministic algorithm in this case.

\begin{corollary}
	There exists $\epsilon^* > 0$ and a randomized online coalition formation algorithm for simple and symmetric FHGs with competitive ratio $\frac 14 + \epsilon^*$.
\end{corollary}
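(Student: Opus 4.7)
The plan is to compose the two ingredients already on the table: the reduction from online FHGs to online matching supplied by \Cref{thm:matchingguarantee}, and the randomized online algorithm of \citet{GKM+19a} for maximum cardinality matching. Concretely, I would fix $\epsilon^* > 0$ to be half the advantage that the \citet{GKM+19a} algorithm achieves over $\tfrac 12$, so that their algorithm is $(\frac{1}{2} + 2\epsilon^*)$-competitive.

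First, I would observe that a simple and symmetric FHG $(N,\vf)$ is naturally encoded by an undirected unweighted graph $G' = (N, E)$ with $\{i,j\}\in E$ iff $\vf(i,j) = 1$. Under this identification, the weight of any matching in $G'$ equals its cardinality, so an MWM of $G'$ is precisely a maximum cardinality matching; furthermore, by the matching--weight identity spelled out in the preliminaries, this cardinality also equals the social welfare of the corresponding matching partition of $N$. Second, I would run the randomized algorithm of \citet{GKM+19a} on $G'$ as agents arrive and interpret its matching output as the coalition structure produced by a coalition formation algorithm. Since $G'$ is unweighted and vertices arrive adversarially, that algorithm returns a matching of expected cardinality at least $(\frac{1}{2} + 2\epsilon^*)\cdot |\mu^*|$, where $\mu^*$ is an MWM of $(N,\vf)$; translating back, the expected social welfare of the output partition is at least $(\frac{1}{2} + 2\epsilon^*)\cdot \SW(\mu^*)$.

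Finally, I would chain this with \Cref{thm:matchingguarantee}, which gives $\SW(\mu^*)\ge \frac{1}{2}\SW(\pi^*(G))$, to conclude that the expected social welfare is at least $(\frac{1}{2} + 2\epsilon^*)\cdot \frac{1}{2}\cdot \SW(\pi^*(G)) = (\frac{1}{4} + \epsilon^*)\SW(\pi^*(G))$. There is no real obstacle here: the only check is that both cited results apply inside the simple symmetric subclass, which is immediate since simple symmetric FHGs are a subclass of symmetric FHGs and the matching--weight identity holds universally. The corollary is then a clean two-step composition.
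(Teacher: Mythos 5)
Your proof is exactly the paper's argument: compose the $(\tfrac12+2\epsilon^*)$-competitive randomized maximum cardinality matching algorithm of \citet{GKM+19a} with the factor-$\tfrac12$ guarantee of \Cref{thm:matchingguarantee}, using the identification of simple symmetric FHGs with unweighted graphs and of matching weight with social welfare. The approach, the choice of $\epsilon^*$, and the chaining of inequalities all match the paper; nothing is missing.
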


In contrast to \Cref{thm:matchingguarantee}, negative results for MWM, i.e., impossibilities of achieving a certain competitive ratio, do not transfer to FHGs.
They only imply that it is impossible to create a matching of a certain quality.
This does not rule out that an online algorithm can create a partition with larger coalitions that achieve more social welfare.
However, we now show that negative results are inherited on domains where positive valuations form a forest
(while other valuations are sufficiently negative).
We provide the proof in \Cref{app:prelimresults}.

\begin{restatable}{proposition}{treenegative}\label{prop:treenegative}
	Let $c\le 1$ and assume that no $c$-competitive (randomized) algorithm exists for online MWM on the tree domain.
	Then, no $c$-competitive (randomized) online coalition formation algorithm exists for symmetric FHGs. 
\end{restatable}

Interestingly, negative results for MWM are usually essentially achieved on the tree domain \citep{Vara11a, KRTV13a, BST18a, BuRo25b},\footnote{These constructions usually contain $0$-weights, which can be replaced with large negative weights.
In the setting of \citet{KRTV13a}, the secretary problem marks the hardest instances, so they are in the tree domain.} which makes the previous theorem very powerful.
However, even if we have a tight result for MWM where the lower bound is achieved on the tree domain, \Cref{thm:matchingguarantee,prop:treenegative} leave a gap of a factor of~$2$.
As we will see, closing this gap can take significant effort and the exact optimal competitive ratio can be at either extreme.

\section{Fractional Hedonic Games under Random Arrival}\label{sec:randomarrival}

While under adversarial arrival, forming maximal matchings constitutes the best deterministic online algorithm \citep{FMM+21a}, this and other greedy approaches prove suboptimal under random arrival.
\Citet{BuRo25b} describe a family of worst-case instances for greedy algorithms in the random arrival ASHG model,
and show that they impose an upper bound of $\mathcal{O}\left(\frac1{n^2}\right)$ on the competitive ratio of such algorithms.
To be more precise, the bound holds for every (randomized) algorithm which always
assigns a newly arrived agent to an existing coalition whenever this increases the current social welfare.
Since those instances are in the tree domain, algorithms will, without loss of generality, only form matchings.
Matchings yield essentially the same welfare in the ASHG and FHG model, only differing by a constant
factor of $2$ (in ASHGs we do not divide by the coalition size of $2$).
Hence, the above bound on the competitive ratio of greedy algorithms holds for our setting, too.

Still, based on \Cref{sec:connections}, a reasonable strategy to obtain good online algorithms for FHGs is to consider better online matching algorithms.
For the matching setting under random arrival, \citet{EFGT22a} 
provide an algorithm that achieves a competitive ratio of $\frac{5}{12} - \mathcal{O}(\frac{1}{n})$ if the algorithm has access to the number of agents $n$.
Importantly, knowledge of $n$ is crucial for achieving this competitive ratio. 
In the first phase of the algorithm, a subset of $k$ agents is not matched at all, and the optimal competitive ratio is achieved for $k:=\left\lfloor \frac n2\right\rfloor$.
However, one can also apply their algorithm by setting $k$ to a fixed constant. 
By setting $k = 3$, for instance, one obtains an online matching algorithm that is $\frac{1}{3} - \mathcal{O}(\frac{1}{n})$-competitive.
We get the following theorem.

\begin{restatable}{theorem}{onlineMWM}\label{thm:online_mwm_unknown_n}
	There exists a randomized online matching algorithm with a competitive ratio under random arrival of at least
	$\frac{1}{3}-\mathcal{O}(\frac{1}{n})$.
\end{restatable}

By applying \Cref{thm:matchingguarantee}, we can interpret this algorithm as a coalition formation algorithm, which implies a competitive ratio of $\frac{1}{6}-\mathcal{O}(\frac{1}{n})$ in the coalition formation domain.
Its competitive ratio in this domain is upper bounded by $\frac5{18}$ (see \Cref{app:518}).
However, we will now prove that we can even achieve a competitive ratio of $\frac{1}{3}-\mathcal{O}(\frac{1}{n})$ in the coalition formation domain.

\begin{algorithm}[tb]
	\caption{Online matching and coalition formation under random arrival}\label{algorithm:onlinemwm}
	\textbf{Input:} Complete and weighted undirected graph $G=(\agS,\wf)$, arrival order $a_1,\dots, a_n$ of the vertices, 
	parameter $k$
			\\
	\textbf{Output:} Matching $\mu$ of $G$
	\begin{algorithmic}[1]
		\State{} $A:=\agS$, $\mu:=\emptyset$
		\Comment{Initialize set of available vertices $A$, returned matching $\mu$}
		\For{$t=1$ to $k$}
		\State Add $\{a_t\}$ to $\mu$.
		\EndFor
		\For{$t=k+1$ to $n$}
		\State{} Let $\agS_t:=\{a_1,\dots,a_t\}$.
		\Comment{$\agS_t$ is the set of vertices arrived up to time $t$}
		\State{} Let $\fracmat_t$ be the MWFM in $G[\agS_t]$.
		In the case of multiple MWFMs in $G[\agS_t]$, we (randomly) choose one of them
		\emph{independent} of $A$ and the arrival order up to time $t$.\label{ln:independenceMWFM}
		
		\Comment{Recall that $\fracmat(a_t)$ is the probability distribution of matching $a_t$ to agents in $N_t$, where $a_t$ remains unmatched with probability $\fracmat(a_t,a_t)$}
		\State{} Choose a random agent $p_t\in N_t$ according to distribution $\fracmat_t(a_t)$. 
		\State Let $e_t:=\{a_t,p_t\}$. \Comment{I.e., $e_t=\{a_t\}$ if $p_t = a_t$}
		\If{$p_t = a_t$}
		\State{} With probability $\frac 13 + \frac{2(t-4)!k!}{3(t-1)!(k-3)!}$, remove $a_t$ from $A$.\label{line:X}
		\State{} Add $\{a_t\}$ to $\mu$. 
		\Else{}
		\If{$p_t\in A$}
		\State{} Remove $a_t$ and $p_t$ from $A$.
		\State{} Remove $\{p_t\}$ from $\mu$ and add $e_t$ to $\mu$.
		\Comment{Add the chosen edge to the matching}
		\Else{}
		\State{} Add $\{a_t\}$ to $\mu$.
		\EndIf{}
		\EndIf{}
		\For{$x\in (\agS_t\cap A)\setminus e_t$}
		\State{} Remove $x$ from $A$ with probability $\frac{\fracmat_t(x,x)}{t-2+\fracmat_t(x,x)}$.\label{line:remove}
		\EndFor{}
		\EndFor{}
		\State{} \Return{} Matching $\mu$ 
	\end{algorithmic}
\end{algorithm}

Consider \Cref{algorithm:onlinemwm}.
This algorithm is once again a matching algorithm but we will show that it achieves the desired guarantee for FHGs.
It takes as input an FHG (in its representation as a weighted graph) together with an arrival order and returns a partition that is obtained by assigning agents to coalitions of size at most~$2$ one by one according to their arrival.

Similar to the algorithm by \citet{EFGT22a}, \Cref{algorithm:onlinemwm} is parameterized by a natural number $k\ge3$, which indicates a ``waiting phase'' of the algorithm, i.e., the first $k$ agents are placed in singleton coalitions.
The main step of the algorithm follows a powerful idea applied in various online matching algorithms \citep{KRTV13a,EFGT22a}:
Whenever an agent arrives, we compute a maximum weight matching.
We then observe the edge to which the newly arrived agent is incident. 
If the other agent of this edge is still available, i.e., presently in a singleton coalition, we form a new coalition of size~$2$ and mark both agents as unavailable.
However, there is a crucial difference: 
Unlike \citet{KRTV13a} and \citet{EFGT22a}, we do not extract the possible coalition from a maximum weight \emph{integral} matching, but from a maximum weight \emph{fractional} matching.
In fact, using a maximum weight integral matching does not yield the desired guarantee for FHGs.
We provide an example that highlights the difficulties in \Cref{app:518}.
In contrast, fractional matchings lead to a relaxation of the integral matching polytope, and can, therefore, achieve a higher maximum weight.
Now, extracting an edge from the MWFM leads to a higher expected weight.
While this does not improve the asymptotic worst-case performance as a matching algorithm, it improves the performance as a coalition formation algorithm.

Another crucial feature of the algorithm is that the computed MWFM in \cref{ln:independenceMWFM} of \Cref{algorithm:onlinemwm} has to be chosen independent of the set of still available agents and the specific arrival order that has led to the currently present agents.
We can do this by running any algorithm for MWFM that we run for a uniformly random renaming of the agents.\footnote{\citet{KRTV13a} and \citet{EFGT22a} achieve a similar property for MWMs by randomly perturbing all weights with a small number $\epsilon$, which guarantees uniqueness of the MWM\@.
However, we prefer a method without perturbation because choosing a suitable perturbation parameter without knowledge of the number of arriving agents would cause further complications.}

We are ready to present our proof.
To make it accessible more quickly, we defer the proofs of intermediary lemmas to \Cref{app:randomarrival:optalg}.

\begin{restatable}{theorem}{CofoRandArr}\label{thm:CofoRandArr}
	There exists a randomized online coalition formation algorithm with a competitive ratio under random arrival of at least
	$\frac{1}{3}-\frac{1}{n}$.
\end{restatable}

\begin{proof}
	Throughout the proof we refer to \Cref{algorithm:onlinemwm} as $\alg$.
	Whenever we run it for a given FHG, we assume a random arrival order.
	
	Let $A_t$ be the set of available agents (i.e., the current set $A$) after $a_t$ has arrived and been processed.
	The first key step is to derive a formula for the probability that an agent is unavailable that only depends on the current iteration $t$ as well as the number $k$ of agents in the waiting phase.
	In particular, this probability is the same for every agent.
	
	\begin{restatable}{lemma}{twothirdsmatched}\label{lem:twothirdsmatched}
		Assume that we run $\alg$ for any FHG $(N,\vf)$ and a random arrival order. 
		For every $k\ge3$, $t\in \{k,\dots, n\}$, every possible realization $\tilde{\agS}$ of $\agS_t$ (i.e., $\tilde{\agS}\subseteq \agS$, $\lvert\tilde{\agS}\rvert=t$), and every agent $a\in\tilde{\agS}$, it holds that
		\begin{equation*}
			\prob\left(a\in \agS_t\setminus A_t\mid \agS_t=\tilde{\agS}\land a\in\tilde{\agS}\right)=\frac23\left(1-\frac{(t-3)!\cdot k!}{t!\cdot(k-3)!}\right)\text.
		\end{equation*}
	\end{restatable}
	
The next key step is to derive a monotonicity property when running $\alg$. 
Informally, we show that the performance cannot worsen if we increase any valuation.

\begin{restatable}{lemma}{monotonic}\label{lem:monotonic}
	Let $G=(\agS,\vf)$, and $G'=(\agS,\vf')$ be two symmetric FHGs with $\vf(i,j)\ge \vf'(i,j)$ for all $i,j\in N$.
	Then, 
	\begin{equation*}
		\EVA{N}\left[\SW_G\left(\alg(G)\right)\right]
		\ge\EVA{N}\left[\SW_{G'}\left(\alg(G')\right)\right]\text.
	\end{equation*}
\end{restatable}

	The high-level idea for the proof of \Cref{lem:monotonic} is as follows.
	The produced social welfare of $\alg$ only depends on the probability that agents are matched and the expected weight of the maximum fractional matching in each step.
	Now, by \Cref{lem:twothirdsmatched}, increasing a single valuation does not change the probability of agents being matched, while the expected weight of the maximum fractional matchings can only become larger.

	The third step is to prove the desired performance of $\alg$ on the specific domain of FHGs where the valuations between agents are negative whenever they are between agents of different coalitions of some given optimal partition.

	\begin{restatable}{lemma}{IslandInstances}\label{lem:island_instances}
		Assume that we run $\alg$ for $k=3$.
		Consider an FHG $G$ together with an optimal (offline) partition $\pi^*(G)$.
		Assume that for all $C_1,C_2\in \pi^*(G)$ with $C_1\neq C_2$, $x\in C_1$, and $y\in C_2$, we have that $\vfd(x,y)<0$.
		Then, it holds that
		\begin{equation*}
			\frac{\EVA{N}\left[\SW\left(\alg(G)\right)\right]}{\SW\left(\pi^*(G)\right)}
			\ge \frac13-\frac1n\text.
		\end{equation*}
	\end{restatable}
	
	We can combine the insights of \Cref{lem:monotonic,lem:island_instances} to prove our theorem.
	Consider any FHG $G = (N,\vf)$ together with an optimal partition $\pi^*(G)$.
	We obtain a related instance $G' = (N,\vf')$ by making all valuations between different coalitions of $\pi^*(G)$ negative, i.e.,
	
	\begin{equation*}
		\vf'(x,y) = \begin{cases}
			\vf(x,y) & \text{ if } \exists C\in \pi^*(G) \text{ with } x,y \in C \text{ or } \vf(x,y)<0\text,\\
			-1 & \text{ else.}
		\end{cases}
	\end{equation*}
	
	Note that $G'$ differs from $G$ only by decreasing valuations.
	Hence, \Cref{lem:monotonic} implies that $\EVA{N}\left[\SW_G\left(\alg(G)\right)\right]
		\ge\EVA{N}\left[\SW_{G'}\left(\alg(G')\right)\right]$.
	Moreover, for any partition $\pi$ of $N$, it holds that 
	$$\SW_{G'}\left(\pi^*(G)\right) = \SW_{G}\left(\pi^*(G)\right) \ge \SW_{G}\left(\pi\right) \ge \SW_{G'}\left(\pi\right)\text.$$
		
	In the equality, we use that valuations in $G$ and $G'$ are identical within coalitions of $\pi^*(G)$.
	Hence, $\pi^*(G)$ is also maximizes welfare with respect to $G'$.
	Now note that $G'$ satisfies the conditions of  \Cref{lem:island_instances}.
	Thus, we have that
	\begin{equation*}
		\frac{\EVA{N}\left[\SW_{G'}\left(\alg(G')\right)\right]}{\SW_{G'}\left(\pi^*(G)\right)}
		\ge \frac13-\frac1n\text.
	\end{equation*}
	
	We conclude that
	\begin{align*}
		\frac{\EVA{N}\left[\SW_G(\alg(G))\right]}{\SW_{G}(\pi^*(G))} 
		\ge \frac{\EVA{N}\left[\SW_{G'}\left(\alg(G')\right)\right]}{\SW_{G}\left(\pi^*(G)\right)}
		= \frac{\EVA{N}\left[\SW_{G'}\left(\alg(G')\right)\right]}{\SW_{G'}\left(\pi^*(G)\right)}
		\ge \frac13-\frac1n\text.
	\end{align*}
	Since $G$ was an arbitrary FHG, we conclude that $\alg$ has the desired competitive ratio.
\end{proof}

We remark that \Cref{algorithm:onlinemwm} is a matching algorithm because all returned coalitions are of size at most~$2$.
Hence, \Cref{thm:CofoRandArr} even applies a competitive ratio of $\frac 13-\frac 1n$ in the matching domain.
Thus, it presents an alternative algorithm for achieving \Cref{thm:online_mwm_unknown_n}.

As we show next, an asymptotic competitive ratio of $\frac 13$ as achieved in \Cref{thm:online_mwm_unknown_n,thm:CofoRandArr} is optimal in both the matching and coalition formation domain.
In particular, this also means that the competitive ratio of $\frac 5{12}$ in the matching domain as achieved by \citet{EFGT22a} when the number of agents is known is off limits.

Since our proof is rather long, we discuss the main idea here and defer a formal proof to \Cref{app:randomarrival:limitmain}.
In essence, our construction relies on a careful interplay of two sets of instances whose positive edges form stars and bi-stars, i.e., a union of two stars whose centers are connected by an additional edge. 
This forces an algorithm to an undesired trade-off:
The optimal matching in a star is to match the edge with the largest weight.
In our bi-stars, the largest weight is the edge connecting the two centers, so the optimal matching contains exactly this edge.
Now, by design of our instances, until both centers have arrived, an algorithm cannot distinguish whether its input is a star or a bi-star.
The key step is to show that a competitive ratio of $\frac 13$ on a star can only be achieved if matching an edge with roughly a probability of at least $\frac 23$.
However, as we will show, this means that when we are in a bi-star 
then the algorithm can only succeed with a probability of about $\frac 13$.
This leads to a bound of the competitive ratio by $\frac 13$.

\begin{restatable}{theorem}{MWMrandNegative}\label{thm:negativeresult_MWMrand}
	No randomized online matching algorithm has a competitive ratio under random arrival of more than
	$\frac{1}{3}$ on the tree domain.
\end{restatable}

Combining \Cref{thm:negativeresult_MWMrand} with \Cref{prop:treenegative}, we conclude that no online coalition formation algorithm has a competitive ratio under random arrival of more than $\frac{1}{3}$.

\begin{corollary}\label{cor:RAcofoBound}
	No randomized online coalition formation algorithm has a competitive ratio under random arrival of more than
	$\frac{1}{3}$.
\end{corollary}

\section{Fractional Hedonic Games under Coalition Dissolution}\label{sec:freediss}

We first consider the setting where algorithms should perform well regardless of a fixed arrival order but where algorithms can dissolve coalitions.
In this setting, there exists a deterministic online matching algorithm achieving a competitive ratio of $\frac{1}{3+2\sqrt{2}}$ \citep{McGr05a, BuRo25b}.\footnote{
	\citet{McGr05a} achieves this competitive ratio in the much related edge arrival model. 
	\citet{BuRo25b} showed that it is preserved in the vertex arrival model. 
}
We can apply \Cref{thm:matchingguarantee} to obtain an algorithmic guarantee for FHGs.

\begin{theorem}\label{thm:dissolutionalg}
	There exists a deterministic online coalition formation algorithm operating under free dissolution with a competitive ratio of at least $\frac{1}{6+4\sqrt{2}}$.
\end{theorem}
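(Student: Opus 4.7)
The plan is to lift the known online matching algorithm under free dissolution into an online coalition formation algorithm, then chain its guarantee with the offline matching bound from Theorem~\ref{thm:matchingguarantee}. Concretely, I would invoke the deterministic $\frac{1}{3+2\sqrt{2}}$-competitive algorithm for online MWM under free dissolution due to \citet{McGr05a} (in the vertex arrival formulation of \citet{BuRo23a}) and wrap it as a black box.

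First, I would describe the wrapper. Since it suffices to consider symmetric FHGs, on every arrival the wrapper feeds the new agent, together with the induced symmetric edge weights to previously arrived agents, to the inner matching algorithm; edges with non-positive weight can be treated as absent, as a weight-maximizing matching algorithm would never include them. Whenever the matching algorithm matches two agents, the wrapper places them into a coalition of size~$2$; otherwise the new agent is a singleton. If the inner algorithm dissolves a matched pair, the wrapper dissolves the corresponding size-$2$ coalition, which is permitted by the free dissolution semantics of online FHGs. The output is always a matching in the FHG sense, and the wrapper preserves the required online properties: decisions depend only on information seen so far, and they are irrevocable except for the explicitly allowed dissolutions.

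Second, I would chain the two approximation factors. Let $M$ be the matching produced by the wrapper on the symmetric FHG $G$ and let $M^*$ be an MWM of $G$. Because matching weight coincides with social welfare for matchings, the inner algorithm's guarantee gives $\SW(M) \ge \frac{1}{3+2\sqrt{2}}\,\SW(M^*)$. By Theorem~\ref{thm:matchingguarantee}, $\SW(M^*) \ge \frac{1}{2}\,\SW(\pi^*(G))$. Multiplying,
\[
\SW(M) \ge \frac{1}{3+2\sqrt{2}}\cdot\frac{1}{2}\,\SW(\pi^*(G)) = \frac{1}{6+4\sqrt{2}}\,\SW(\pi^*(G)),
\]
which is the desired competitive ratio.

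I do not expect a real obstacle here; the argument is a clean composition. The only care is bookkeeping: verifying that the input handed to the inner matching algorithm is consistent with its required arrival model, that negative-weight edges are handled so as not to spuriously worsen either guarantee, and that the dissolution operations of the matching algorithm translate one-to-one into permitted coalition dissolutions. Once these points are confirmed, the theorem follows directly from the composition above.
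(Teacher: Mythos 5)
Your proof is correct and follows exactly the paper's route: the paper also obtains Theorem~\ref{thm:dissolutionalg} by composing the deterministic $\frac{1}{3+2\sqrt{2}}$-competitive online matching algorithm under free dissolution of \citet{McGr05a} (in the vertex-arrival formulation of \citet{BuRo23a}) with the $\frac{1}{2}$-approximation guarantee of Theorem~\ref{thm:matchingguarantee}. Your additional bookkeeping about symmetrization, non-positive edges, and the one-to-one translation of dissolutions is sound and only spells out details the paper leaves implicit.
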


The algorithm mentioned above is optimal for the matching domain in the tree domain \citep{Vara11a}.
By \Cref{prop:treenegative}, no deterministic online algorithm is better than $\frac{1}{3+2\sqrt{2}}$-competitive.
We can, however, improve upon this result by proving a bound matching \Cref{thm:dissolutionalg}.

We illustrate here the main ideas for its proof and defer the full proof to \Cref{app:dissolutionbound}.
The proof technique is similar to the proof by \citet{Vara11a} in the matching domain.
However, we construct an enhanced version of the adversarial instance, where the partitions produced by an algorithm continue to be matchings, but the partition with the highest welfare is better than the best matching by a factor of about~$2$.
We remark that our construction only uses instances with rational valuations, even though we also exclude irrational competitive ratios higher than $\frac{1}{6+4\sqrt{2}}$.

\begin{restatable}{theorem}{dissolutionbound}\label{thm:dissolutionbound}
	No deterministic online coalition formation algorithm operating under free dissolution has a competitive ratio of more than $\frac{1}{6+4\sqrt{2}}$ for symmetric FHGs.
\end{restatable}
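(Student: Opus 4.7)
The plan is to extend Varadaraja's adversarial construction from the matching domain to the FHG domain in such a way that the gap between the algorithm's welfare and the optimum doubles. The starting point is the structure of the matching lower bound: the adversary reveals agents one by one with symmetric edge weights forming a geometric progression of ratio $1+\sqrt{2}$, so that at every dissolve-or-keep decision point, any deterministic algorithm with free dissolution loses a constant fraction of welfare compared to the optimal matching on the revealed subgraph. The final matching weight of any such algorithm is thereby bounded by $\frac{1}{3+2\sqrt{2}}\cdot \SW(\pi^*_{\mathrm{match}})$.

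I would then embed each \emph{Varadaraja edge} $(a_i, b_i)$ of weight $w_i$ into a small block that admits a coalition of social welfare tending to $2w_i$ in the adversary's optimum, while restricting the algorithm's profitable options on the block to matchings. A natural candidate is, once the Varadaraja phase on the $(a_i,b_i)$ is complete, to attach additional auxiliary agents whose positive valuations are concentrated on a specific endpoint of $(a_i,b_i)$ and scaled to $w_i$, with all other valuations set sufficiently negative. With this choice, the welfare-maximizing coalition on the block is a triangle or small star whose social welfare converges to $2w_i$, yielding an adversary optimum of essentially $2\cdot \SW(\pi^*_{\mathrm{match}})$ on the full instance.

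To complete the argument, the construction must prevent the algorithm from realizing any welfare-doubling coalition. I would ensure this by staging arrivals so that extension agents appear only after the algorithm's Varadaraja decisions on $(a_i,b_i)$ are already fixed, and by choosing cross-block valuations sufficiently negative that every coalition of size three or more with positive welfare must be contained in a single block. Even under free dissolution, each block then presents the algorithm with a dichotomy: keep the Varadaraja matching and collect at most the matching welfare on the block, or dissolve matched pairs in order to rebuild an extension coalition, in which case the welfare given up on the Varadaraja side is charged against the extension gain via a block-wise argument. In both cases the per-block algorithm welfare is at most half the per-block partition optimum.

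The main obstacle is precisely this block-wise charging step: the weights must be calibrated so that any dissolution-based attempt to collect extension welfare is strictly suboptimal for the algorithm, and this must remain true uniformly over all sequences of keep/dissolve decisions across blocks and all sequencings of the extensions. This extra factor of $\frac{1}{2}$, multiplied with the Varadaraja factor, yields the target bound $\frac{1}{2}\cdot \frac{1}{3+2\sqrt{2}} = \frac{1}{6+4\sqrt{2}}$. I would conclude with a standard limiting argument: for any rational $\epsilon>0$, one can realize the construction with rational weights approximating the irrational geometric ratio, giving a competitive ratio of at most $\frac{1}{6+4\sqrt{2}}+\epsilon$; letting $\epsilon\to 0$ excludes any deterministic algorithm with competitive ratio strictly above $\frac{1}{6+4\sqrt{2}}$, as desired.
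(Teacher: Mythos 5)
Your high-level plan — adapt Varadaraja's free-dissolution lower bound by surrounding each "live" pair with auxiliary structure whose optimal coalition has welfare close to twice the pair weight — is exactly the route the paper takes. But the proposal has three substantive gaps, the first two of which would break the argument as stated.

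First, the timing is wrong. You propose to "stage arrivals so that extension agents appear only after the algorithm's Varadaraja decisions on $(a_i,b_i)$ are already fixed." But the competitiveness guarantee $\SW(\alg(\ldots)) \ge \gamma\cdot\SW(\pi^*)$ is invoked by the adversary at the moment it (hypothetically) stops the game, i.e., right before the algorithm dissolves $\{a'_i,b'_i\}$. At that moment the high-welfare benchmark partition must already exist on the revealed agents; otherwise there is nothing to compare against. The paper therefore interleaves the star construction with the adaptive weight escalation, and ensures the leaves are "safe" by giving each leaf a positive valuation to exactly one of the two current centers and a large negative valuation to everything else, so the algorithm can never profitably absorb a leaf while the center is still occupied. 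Extensions attached post hoc cannot play this role.

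Second, neither a triangle nor a "small star" gives what you need. A triangle $\{a,b,c\}$ with all edges of weight $w$ has welfare $2w$, but all three edges are positive, so the algorithm can itself form the triangle and collect the full $2w$; the construction must instead prevent the algorithm from ever building a positive coalition of size $\ge 3$. A star with only one positive edge per leaf (and $0$ among leaves) blocks this, but a star on $\ell$ leaves has welfare only $2\cdot\tfrac{\ell}{\ell+1}\cdot w$, so $\ell$ must grow unboundedly across phases, and the resulting $\tfrac{\ell}{\ell+1}$ error terms must be tamed. This is the "block-wise charging" obstacle you yourself flag as the main difficulty but do not resolve. The paper's resolution is the crucial technical move you are missing: set $\epsilon_i \propto 2^{-i}$ (exponentially decaying weight increments) and $\ell_i = \lceil (1-\epsilon_i)/\epsilon_i\rceil$, so that $\tfrac{\ell_i}{\ell_i+1}\ge 1-\epsilon_i$ and the accumulated error $\sum_j \epsilon_j$ is bounded by a slack of $\gamma - c$ built into the choice of $\beta = \gamma+c$. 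This yields the recursion $y_i \ge \beta\bigl(y_{i+1} + \sum_{j\le i+1} y_j\bigr)$ with $\beta > \tfrac{1}{3+2\sqrt2}$ exactly, and Varadaraja's sequence lemma closes the argument.

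Finally, your description of Varadaraja's lower bound as a fixed geometric progression of ratio $1+\sqrt2$ is not accurate: the construction is adaptive, with the adversary escalating weights in response to the algorithm's keep/dissolve choices until a dissolution occurs. Correspondingly, the paper does not approximate an irrational ratio by rationals; it instead normalizes so that $c/\gamma$ is rational, making every $\epsilon_i$ and hence every valuation rational from the outset.
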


\begin{proof}[Proof sketch]
The crucial idea is to use an algorithm that allegedly beats a competitive ratio of $\frac{1}{6+4\sqrt{2}}$ to construct a sequence of real numbers $(x_i)_{i\in \mathbb N}$ with $x_1 = 1$, $x_i\ge 0$ for $i\ge 2$, and such that for all $i\in \mathbb N$, it holds that
	\begin{equation}\label{eq:Varasequence}
		x_i \ge \beta\left(x_{i+1} + \sum_{j=1}^{i+1}x_j\right)
	\end{equation}
where $\beta > \frac{1}{3+2\sqrt{2}}$.
The proof of Theorem~2 by \citet{Vara11a} for the case of $k=2$ and $f=0$ shows that such a sequence of numbers does not exist.

\begin{figure}
	\centering
	\begin{tikzpicture}
		\node[smallagent] (a) at (0,0) {};
		\node[smallagent] (b) at (6.5,0) {};
		\node at (a) {$a_i$};
		\node at (b) {$b_i$};
		\node at ($(a) + (0,.7)$) {\color{CoalitionColor} $C_i$};
		\node at ($(b) + (0,.7)$) {\color{CoalitionColor} $D_i$};
		
		\draw (a) edge node[midway, fill = white] {$y_i$} (b);
		
		\node[smallagent] (c) at (9.5,0) {};
		\draw (b) edge node[pos =.62, fill = white] {$y_i+j^*\epsilon_i$} (c);
		
		\foreach \i in {a,b}{		
		\node[aux] (l21) at ($(\i) + (240:3)$) {};
		\node[aux] (l22) at ($(\i) + (260:3)$) {};
		\draw (\i) edge node[pos = .65, fill = white] {$y_i+\epsilon_i$} (l21);
		\draw (\i) edge node[pos = .75, fill = white] {$y_i+\epsilon_i$} (l22); 
		\node[rotate = 160] at ($(l21)!.5!(l22)$) {\dots};

		\node[aux] (l11) at ($(\i) + (195:3)$) {};
		\node[aux] (l12) at ($(\i) + (215:3)$) {};
		\draw (\i) edge node[midway, fill = white] {$y_i$} (l11);
		\draw (\i) edge node[midway, fill = white] {$y_i$} (l12); 
		\node[rotate = 115] at ($(l11)!.5!(l12)$) {\dots};
		
		\node[aux] (l31) at ($(\i) + (310:3)$) {};
		\node[aux] (l32) at ($(\i) + (330:3)$) {};
		\draw (\i) edge node[pos = .53, fill = white,xshift = .1cm] (m) {$y_i+k^*\epsilon_i$} (l32); 
		\draw (\i) edge node[pos =.68, fill = white,xshift = .25cm] {$y_i+k^*\epsilon_i$} (l31);
		\draw[shorten <=2cm] (\i) -- (l32);
		\node[rotate = 50] at ($(l31)!.5!(l32)$) {\dots};
		
		\node[rotate = 15] at ($(l22)!.5!(l31)$) {\textbf{\dots}};
		
		\draw[thick,CoalitionColor, fill=CoalitionColor!50, fill opacity=0.2]  \convexpath{\i,l32,l31}{.42cm};
		}

		\draw (a) edge ($(a)-(1,0)$);
		\node at ($(a)-(1.3,0)$) {\dots};
	\end{tikzpicture}
	\caption{Illustration of Phase~$i$ in the construction of the adversarial instance in the proof of \Cref{thm:dissolutionbound}.
	Each star attached to $a_i$ and $b_i$ contains $\ell_i$ leaves.
	\label{fig:dissolutionsketch}}
\end{figure}
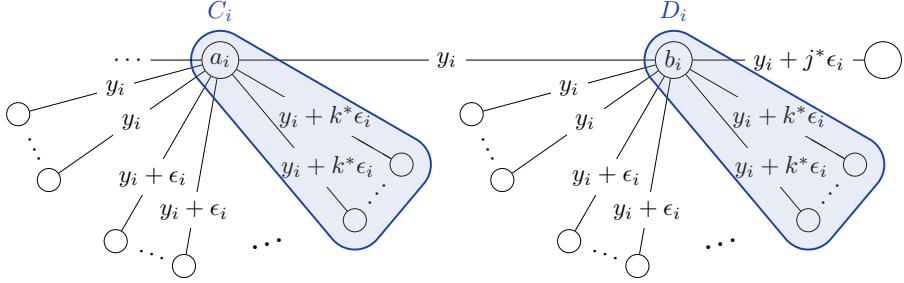

The adversarial instance is established in phases, and in each phase, we determine a new element of a sequence ${(y_i)}_{i\in \mathbb N}$ of rational numbers that satisfies an inequality of the type of Inequality~(\ref{eq:Varasequence}).\footnote{It is easy to eventually transform this sequence to the exact desired form of ${(x_i)}_{i\in \mathbb N}$.}

Throughout the execution of the instance, the algorithm can only maintain a single coalition with positive welfare of $y_i$ containing exactly two agents, say $\{a_i,b_i\}$.
We now illustrate a Phase~$i$ for some fixed $i\in \mathbb N$.
A visualization is provided in \Cref{fig:dissolutionsketch}.
All agents that newly appear have a mutual positive valuation with exactly one of $a_i$ and $b_i$, a valuation of~$0$ for some other agents, and a high negative valuation for most agents, in particular for the other agent in $\{a_i,b_i\}$.
The new agents form ``star'' coalitions with $a_i$ and $b_i$.
In the first part of a stage, we achieve a situation where stars with $\ell_i$ leaves have arrived for both endpoints, where all of their positive valuations are $y_i$.
These are the leftmost stars attached to $a_i$ and $b_i$ in \Cref{fig:dissolutionsketch}.

Then, we let new star coalitions arrive while incrementing their positive valuations by a specifically tailored rational value $\epsilon_i$ in each step.
Eventually, the algorithm has to dissolve $\{a_i,b_i\}$ and form a new coalition with one of these agents and a new agent of valuation $y_i + j^*\epsilon_i$ for some positive integer $j^*$.
This has to happen as otherwise, edges of unbounded weight arrive, which would lead to an unbounded competitive ratio.

In the previous step, i.e., when agents with valuations of $y_i + k^*\epsilon_i$, where $k^* = j^*-1$ were arriving, we had two ``star'' coalitions with $a_i$ and $b_i$, which we now call $C_i$ and $D_i$, respectively.
Then, a version of Inequality~(\ref{eq:Varasequence}) can be established with two differences: (1) instead of $\beta$, we have $2\gamma$, where $\gamma$ is the competitive ratio of our algorithm, and (2) there is an error term dependent on $\epsilon_i$.
For this, we compare $y_i$, i.e., the social welfare of $\{a_i,b_i\}$, with the social welfare of the partition containing $D_i$ and $C_j$ for $1\le j \le i$, where the $C_j$ evolve from earlier phases.
Note that $C_i$ and $D_i$ have a welfare of about $2 (y_i + j^* \epsilon_i)$.

A crucial idea is to control the error terms to be very small in sum by having $\epsilon_i$ decay exponentially for $i$ tending to infinity, while the number of leaves $\ell_i$ grows as $\frac{1-\epsilon_i}{\epsilon_i}$.
This allows to prove Inequality~(\ref{eq:Varasequence}) for $\beta = \gamma + \frac{1}{6+4\sqrt{2}}$.
\end{proof}

\section{Conclusion}

We have studied two different models for online coalition formation in FHGs to maximize social welfare, a goal that does not allow for bounded competitive ratios in the standard model with an adversarial arrival order.
Designing good online coalition formation algorithms is deeply related to designing good online matching algorithms.
It is possible to leverage matching algorithms with a factor~$2$ in welfare loss, while limitations for matching algorithms can be preserved exactly if they hold on the tree domain.

Under random arrival, we have seen that a competitive ratio of $\frac 13$ is asymptotically optimal in both the matching and coalition formation domain.
Hence, by providing a better analysis of a matching algorithm, we do not lose the factor of~$2$ in welfare.
Moreover, in the coalition dissolution model, we determined that the optimal competitive ratio is $\frac1{6+4\sqrt2}$ as compared to the optimal competitive ratio of $\frac1{3+2\sqrt2}$ in the matching domain.
For this, we constructed a new family of worst-case instances showing that the worst-case behavior on the coalition formation domain can be a factor of~$2$ worse than on the matching domain.

An intriguing question is whether forming coalitions larger than~$2$ can be beneficial.
In fact, our paper reinforces the opposing view that matching algorithms exhibit (asymptotically) optimal performance.
Thus, from an algorithmic perspective, larger coalitions are often unnecessary.
In contrast, requiring algorithms to form larger coalitions can be problematic as such algorithms may fail to provide guarantees regarding approximate social welfare.
For example, partitions that include a coalition of size at least three result in a negative welfare for instances on the tree domain.
However, this depends on the presence of large negative valuations.
In contrast, \citet[Theorems~4.4 and~4.5]{FMM+21a} present an algorithm that forms larger coalitions for FHGs with nonnegative valuations.
Nevertheless, the competitive ratio they achieve depends on the range of the involved valuations.
It would be interesting to explore whether this dependency can be eliminated under coalition dissolution or random arrival.

Another future direction is to consider the combination of random arrival and coalition dissolution.
Note that our families of worst-case instances are specifically tailored to their setting and do not challenge algorithms in the respective other setting.
For example, the algorithms developed by \citet{McGr05a} and \citet{BuRo25b} discussed in the beginning of \Cref{sec:freediss} find the optimal partition on stars and double stars (i.e., the worst-case instance in the random arrival setting can be solved optimally under free dissolution, even if agents arrive in an arbitrary order).
Hence, it could well be that algorithms can achieve higher competitive ratios in a combined setting.

\appendix

\section*{Appendix}

In the appendix, we present proofs and further material.

\section{Proofs in Section~\ref{sec:connections}}\label{app:prelimresults} 

In this section, we present the proofs missing in \Cref{sec:connections}.

\subsection{Simple proof of Theorem~\ref{thm:matchingguarantee}} 

In this section, we provide a new proof for \Cref{thm:matchingguarantee} which was first proved by \citet[][Theorem 1]{FKMZ21a}.
The proof by \citet{FKMZ21a} is based on a comparison of the maximum weight matching and the optimal partition by deriving connections of edges not contained in the matching. 
By contrast, we present a simple proof based on a folklore result about matchings that states that the maximum weight of a matching exceeds the weight of a ``uniform fractional'' matching where each edge is fractionally matched with probability $\frac 1n$, if $n$ is the number of vertices.
We combine this with the idea that the social welfare of a coalition in an FHG is twice the weight of the ``uniform fractional'' matching.

\matchingguarantee*

\begin{proof}
	Assume that we are given an FHG $G = (N,\vf)$ and let $\pi^*$ be a partition maximizing social welfare.
	Let $C\subseteq N$ be a coalition and $\mu^*(C)$ be a maximum weight matching on the subgraph of $(N,\vf)$ induced by $C$.
	Recall that we denote the weight of a matching $M$ by $\vf(M)$. 
	
	A folklore theorem in matching \citep[see, e.g.,][Lemma~1]{BuRo25b} says that
	\begin{equation}\label{eq:avgedge}
		\frac 1{|C|}\sum_{\{i,j\}\subseteq C, i\neq j}\vf(i,j) \le \vf(\mu^*(C))\text.
	\end{equation}
	
	We conclude that 
	\begin{align*}
		\SW(\pi^*) &= \sum_{C\in \pi^*}\SW(C) = \sum_{C\in \pi^*} \sum_{i\in C}\sum_{j\in C\setminus \{i\}}\frac {\vf_i(j)}{|C|} = \sum_{C\in \pi^*} \frac 1{|C|}\sum_{\{i,j\}\subseteq C, i\neq j} 2 \vf(i,j) \\
		&\overset{\text{Eq.~(\ref{eq:avgedge})}}
		{\le} \sum_{C\in \pi^*} 2\vf(\mu^*(C))\le 2\vf(\mu^*(N))
		= 2\SW(\mu^*(N))\text.
	\end{align*}
	
	The second line uses that each valuation occurs twice in a symmetric game, once for each endpoint.
	The second-to-last line uses that $\bigcup_{C\in \pi^*} \mu^*(C)$ is a matching on $N$, and, therefore, its weight is bounded by the maximum weight matching of $N$.
	The last line uses that each nonsingleton coalition $C = \{i,j\}$ in $\mu^*(N)$ consists of two agents, i.e., $\SW(C) = \uf_i(C) + \uf_j(C) = \frac 12 \vf(i,j) + \frac 12 \vf(i,j) = \vf(i,j)$.
\end{proof}

\subsection{Matching Limitations on the Tree Domain}

In this section, we provide the proof for our result that lets us transfer negative results for online matching algorithms on the tree domain to the coalition formation domain.

\treenegative*

\begin{proof}
	We show a proof by contraposition.
	Assume a $c$-competitive online coalition formation algorithm $\alg$ for symmetric FHGs exists.
	We construct a $c$-competitive algorithm $\alg'$ on the tree domain that never forms a coalition of size three or more.
	To this end, let $\alg'$ simulate $\alg$, i.e., whenever a new agent and her valuations are revealed to $\alg'$, it feeds the same input to $\alg$.
	Then, $\alg'$ observes the output of $\alg$.
	If the new agent is in a coalition of size two with positive social welfare, then $\alg'$ forms the same coalition.
	In all other cases, $\alg'$ puts the new agent into a singleton coalition.
	Additionally, if $\alg$ dissolves a coalition in the coalition dissolution setting, then $\alg'$ also dissolves the matched pair from this coalition if necessary.
	In particular, $\alg'$ only returns (random) matchings and, therefore, is a matching algorithm.
	
	On the tree domain, $\alg'$ achieves at least as high (expected) welfare as $\alg$ because the large negative valuations make every coalition of size more than two have negative social welfare.
	Consequently, every coalition of size at least~$3$ achieves less welfare than when it was dissolved into singleton coalitions (or pairs of positive valuation).
	Thus, $\alg'$ is $c$-competitive on the tree domain against all possible partitions and, therefore, in particular, against all matchings.
\end{proof}

\section{Further Material for Section~\ref{sec:randomarrival}}\label{app:randomarrival} 

In this section, we present further details for the random arrival setting.

\subsection{Proof of Theorem~\ref{thm:online_mwm_unknown_n}}

In this section, we consider the performance of the algorithm by \citet{EFGT22a}, when the number of agents is unknown and $k$ is set to $3$.
\onlineMWM*

\begin{proof}
	Consider Algorithm~$1$ as defined by \citet{EFGT22a}.
	We refer to this algorithm as $\alg$. Note that the algorithm is parameterized by a positive integer $k$.

	Consider an arbitrary FHG $G = (N,\vf)$. 
	Let $\mu^*$ be a maximum weight matching and $\alg(G)$ be the matching computed by $\alg$.
	In the proof of their Theorem~3.1, \citet{EFGT22a} obtain the following inequality for all $k\ge 3$:
	\begin{equation*}
		\frac{\EVA{N}\left[\vf(\alg(G))\right]}{\vf(\pi^*(G))} \ge \frac 13 + \frac{k^2}{n^2} - \frac{4k^3}{3n^3} - \mathcal O\left(\frac 1n\right)\text.
	\end{equation*}
	Setting $k = 3$, this implies that the competitive ratio of $\alg$ is at least
	\begin{equation*}
		\inf_G\frac{\EVA{N}\left[\vf(\alg(G))\right]}{\vf(\pi^*(G))} \ge \frac 13 - \mathcal O\left(\frac 1n\right)\text.\qedhere
	\end{equation*}
\end{proof}

\subsection{Optimal Coalition Formation Algorithm}\label{app:randomarrival:optalg}

In this section, we provide the proofs of auxiliary lemmas in the proof of \Cref{thm:CofoRandArr}, restated as follows.

\CofoRandArr*

The statement of \Cref{lem:twothirdsmatched} is similar to the statement of Lemma~3.2 by \citet{EFGT22a}.
While the proofs are quite different, they both rely on showing that the probability of being unavailable can be captured by the following recursive formula:
For all $k,t\in \mathbb N$, $t\ge k+1$, define 
	\begin{equation}\label{eq:EFGTrecursionformula}
		p(k,k):=0 \text{ and }p(k,t):=\frac2t+\frac{t-3}t\cdot p(k,t-1)\text.
	\end{equation}

\citet{EFGT22a} show that this formula resolves as captured in the following lemma, see the end of the proof of their Lemma~3.2.

\begin{lemma}[\citet{EFGT22a}]\label{lem:recursion}
	For $k,t\in \mathbb N$ with $k\ge 3$ and $t\ge k$, it holds that 
	\begin{equation*}
		p(k,t) = \frac23\left(1-\frac{(t-3)!\cdot k!}{t!\cdot(k-3)!}\right)\text.
	\end{equation*}
\end{lemma}

Note that this implies that the probability of leaving $a_t$ unmatched in \Cref{line:X} of \Cref{algorithm:onlinemwm} is $1-p(k,t-1)$.
This already suggest of a relationship of the resolution of the recursion with the probability of availability.
In fact, we now apply it to prove our first key lemma. 

\twothirdsmatched*

\begin{proof}
	Consider any realization $\tilde{\agS}$ of $\agS_t$ and any agent $a\in \tilde{\agS}$.
	By \Cref{lem:recursion}, it suffices to prove that
	\begin{equation}\label{eq:EFGTrecursion}
		\prob\left(a\in \agS_t\setminus A_t\mid \agS_t=\tilde{\agS}\land a\in\tilde{\agS}\right)
		=p(k,t)\text.
	\end{equation}
	
	We prove this 
	by induction over $t$.
	Clearly, since the arrival of the first $k$ agents does not alter the set $A_t$, we have that $N_k = A_t$ and, therefore, 
	$\prob\left[a\in \agS_t\setminus A_t\mid \agS_t=\tilde{\agS}\land a\in\tilde{\agS}\right]= 0 = p(k,k)$. 
	
	Now assume that $t>k$.
	We partition the event that $a$ is unavailable (i.e., not in $A_t$) after the $t$th agent has been processed 
	and given that $\agS_t=\tilde{\agS}$, into the following disjoint events:
	\begin{enumerate}[label=(\roman*)]
		\item $a=a_t$,
		\item $a=p_t$ and $a\neq a_t$, and 
		\item $a=x$ for some $x\in\tilde{\agS}\setminus e_t$.
	\end{enumerate}
	Event (i) occurs with probability $\frac1t$.
	For (ii), we compute
	\begin{align*}
		\prob\left(a=p_t,a\neq a_t\right)
		&=\sum_{x\in\tilde{\agS}\setminus\{a\}}\prob\left(a=p_t\mid x=a_t\right)\cdot\prob\left(x=a_t\right)
		=\sum_{x\in\tilde{\agS}\setminus\{a\}}\fracmat_t(x,a)\frac1t\\
		&=\sum_{x\in\tilde{\agS}\setminus\{a\}}\fracmat_t(a,x)\frac1t
		=\left(1-\fracmat_t(a,a)\right)\frac1t\text{.}
	\end{align*}
	Hence, the probability for event (iii) is $1-\frac1t-\left(1-\fracmat_t(a,a)\right)\frac1t=\frac{t-2+\fracmat_t(a,a)}t$.
	We get
	\begin{align*}
		\prob\left(a\in \agS_t\setminus A_t \mid \agS_t=\tilde{\agS} \right.&\left.\land a\in\tilde{\agS}\right)
		=\prob\left(a\in \agS_t\setminus A_t \mid \agS_t=\tilde{\agS}\land a\in\tilde{\agS}\land a=a_t\right)\frac1t\\
		&+\prob\left(a\in \agS_t\setminus A_t \mid \agS_t=\tilde{\agS}\land a\in\tilde{\agS}\land a\notin e_t\right)
		\frac{t-2+\fracmat_t(a,a)}t\\
		&+\prob\left(a\in \agS_t\setminus A_t \mid \agS_t=\tilde{\agS}\land a\in\tilde{\agS}\land a=p_t\land a\neq a_t\right)\left(1-\fracmat_t(a,a)\right)\frac1t\text{.}
	\end{align*}
	
	Assume that $a=a_t$.
	If $p_t \neq a_t$, then $a_t$ is unavailable after being processed if and only if $p_t$ was available at the arrival of $a_t$, i.e., after the $(t-1)$st agent was processed.
	By the induction hypothesis for $\agS_{t-1}=\tilde{\agS}\setminus\{a\}$, agent $p_t$ was made unavailable with probability $p(k,t-1)$ and is, therefore, available with probability $1-p(k,t-1)$.
	Moreover, if $p_t = a_t$, then $a_t$ becomes unavailable if they are removed in \cref{line:X}.
	By definition this happens with probability $\frac 13 + \frac{2(t-4)!k!}{3(t-1)!(k-3)!} = 1-p(k,t-1)$ as well.
	Hence, in any case, $a$ becomes unavailable with probability $1-p(k,t-1)$ after being processed.
	
	If $a$ is neither $a_t$ nor $p_t$, then $a$ is unavailable at the end of iteration $t$ if they are unavailable at the end of iteration $t-1$.
	Furthermore, if they are available at the end of iteration $t-1$, they can be made unavailable in \cref{line:remove} of the algorithm.
	This happens with probability $\frac{\fracmat_t(a,a)}{t-2+\fracmat_t(a,a)}$.
	Using the induction hypothesis, we compute that $a$ is unavailable by time $t$ with probability
	$p(k,t-1)+(1-p(k,t-1))\frac{\fracmat_t(a,a)}{t-2+\fracmat_t(a,a)}$.

	Finally, if $a=p_t$, then $a$ is always unavailable at the end of iteration~$t$. Indeed, if $a$ was available during the iteration, it will be matched to $a_t$ upon their arrival.

	Putting it all together, we get
	\begin{align*}
		&\prob\left(a\in \agS_t\setminus A_t \mid \agS_t=\tilde{\agS}\land a\in\tilde{\agS}\right)
		=(1-p(k,t-1))\frac1t+\\ 
		&\left(p(k,t-1)+
		\left(1-p(k,t-1)\right)\frac{\fracmat_t(a,a)}{t-2+\fracmat_t(a,a)}\right)\frac{t-2+\fracmat_t(a,a)}t 
		+\left(1-\fracmat_t(a,a)\right)\frac1t\\
		&=(1-p(k,t-1))\frac1t+\frac1t\left(\left(t-2+\fracmat_t(a,a)\right)p(k,t-1)+
		\left(1-p(k,t-1)\right)\fracmat_t(a,a)\right)+\left(1-\fracmat_t(a,a)\right)\frac1t\\
		&=\frac1t\Big(1-p(k,t-1)+\left(t-2+\fracmat_t(a,a)\right)p(k,t-1)+
		\left(1-p(k,t-1)\right)\fracmat_t(a,a)+1-\fracmat_t(a,a)\Big)\\
		&=\frac1t\Big(2-p(k,t-1)+\left(t-2\right)p(k,t-1)+\fracmat_t(a,a)
		p(k,t-1)+
		\left(1-p(k,t-1)\right)\fracmat_t(a,a)-\fracmat_t(a,a)\Big)\\
		&=\frac1t\left(2-p(k,t-1)+\left(t-2\right)p(k,t-1)\right)\\
		&=\frac2t+\frac{t-3}t\cdot p(k,t-1)\\
		&\overset{\text{Eq.~(\ref{eq:EFGTrecursionformula})}}=p(k,t)\text{.}
	\end{align*}
	This concludes the proof of \Cref{eq:EFGTrecursion} and thus the entire proof.
\end{proof}

Next, we prove the lemma concerning the monotonicity property when running \Cref{algorithm:onlinemwm}.

\monotonic*

\begin{proof}
	Let $t\ge k + 1$.
	Consider any subset $\tilde{\agS}\subseteq \agS$ that realizes as $\agS_t$ and an agent $a\in \tilde{\agS}$.
	We set $\vf(e_t) = 0$ if $e_t = \{a_t\}$, i.e., $e_t$ consists only of a singleton.
	We then have
	\begin{align*}
		&\EV\left[\vfd(e_t)\mid \agS_t=\tilde{\agS}\land a\in\tilde{\agS}\right]\\
		&=\sum_{x\in\tilde{\agS}}\EV\left[\vfd(e_t)\mid \agS_t=\tilde{\agS}\land a\in\tilde{\agS}\land a_t = x\right]\cdot\prob\left(a_t = x\mid \agS_t=\tilde{\agS}\land a\in\tilde{\agS}\right)\\ 
		&=\sum_{x\in\tilde{\agS}}\EV\left[\vfd(e_t)\mid \agS_t=\tilde{\agS}\land a\in\tilde{\agS}\land x=a_t\right]\cdot\frac1t\\
		&=\frac1t\sum_{x\in\tilde{\agS}}\sum_{x'\in\tilde{\agS}}\EV\left[\vfd(e_t)\mid \agS_t=\tilde{\agS}\land a\in\tilde{\agS}\land x=a_t\land x'=p_t\right]\\
		&\phantom{=}\cdot\prob\left(x'=p_t\mid \agS_t=\tilde{\agS}\land a\in\tilde{\agS}\land x=a_t\right)\\
		&=\frac1t\sum_{x\in\tilde{\agS}}\sum_{x'\in\tilde{\agS}}\vfd(x,x')\cdot\fracmat_t(x,x')\\
		&=\frac2t\EV\left[\wf(\fracmat_t)\right]
	\end{align*}
	
	In the first and third equation, we use the law of total probability.
	The last equality follows because with the two summations we count every pair of vertices twice.

	Next, we consider the expected competitive ratio.
	It holds
	\begin{align*}
		&\EVA{N}\left[\SW_G\left(\alg(G)\right)\right]
		=\sum_{t=k+1}^n\EV\Big[\I\left[p_t\in A_{t-1}\right]\cdot \vfd(e_t)\Big]\\ 
		&=\sum_{t=k+1}^n\EV\Big[\I\left[p_t\in A_{t-1}\right]\Big]\cdot\EV\left[\vfd(e_t)\right]
		=\sum_{t=k+1}^n\prob\left(p_t\in A_{t-1}\right)\cdot\EV\left[\vfd(e_t)\right]
	\end{align*}
	
	We use that the random variables $\I\left[p_t\in \agS_t\cap A_t\right]$ and $\vfd(e_t)$
	are independent of each other. 
	Indeed, the MWFM computed in \cref{ln:independenceMWFM} of \Cref{algorithm:onlinemwm} was chosen independent of $A$ (which at this points was equal to $A_{t-1}$).

	Moreover, by \Cref{lem:twothirdsmatched}, we know that
	\begin{equation*}
		\prob\left(p_t\in A_{t-1}\right) = 1 - p(k,t-1)\text.
	\end{equation*}
	Hence,
	\begin{align*}
		&\EVA{N}\left[\SW_G\left(\alg(G)\right)\right]
		=\sum_{t=k+1}^n\prob\left(p_t\in A_{t-1}\right)\cdot\EV\left[\vfd(e_t)\right]
		=\sum_{t=k+1}^n\left(1-p(k,t-1)\right)\frac2t\EV\left[\wf(\fracmat_t)\right]\\
		&\ge\sum_{t=k+1}^n\left(1-p(k,t-1)\right)\frac2t\EV\left[\wf(\fracmat'_t)\right]
		\ge\sum_{t=k+1}^n\left(1-p(k,t-1)\right)\frac2t\EV\left[\wf'(\fracmat'_t)\right]\\
		&=\sum_{t=k+1}^n\prob\left(p_t\in A_{t-1}\right)\cdot\EV\left[\vfd'(e'_t)\right]
		=\EVA{N}\left[\SW_{G'}\left(\alg(G')\right)\right]\text.
	\end{align*}
	There, $\fracmat'_t$ and $e'_t$ denote the constructed MWFM and edges when executing \Cref{algorithm:onlinemwm} for $G'$.
	In the first inequality, we use that $\fracmat_t$ is an MWFM for $G[N_t]$ and hence achieve a higher weight in $G$ than $\fracmat'_t$.
	In the second inequality, we use that $\vf(x,y)\ge \vf'(x,y)$ for all agents $x,y\in N$.
	This completes the proof of the lemma.
\end{proof}

Our next lemma shows that the desired performance is achieved on a special set of instances where valuations between coalitions of some optimal partition are negative.

\IslandInstances*

\begin{proof}
	Let $G = (N,\vf)$ be an FHG that satisfies the properties of the statement of the lemma together with a uniformly random arrival order $\sigma$.
	
	For $t\le 3$, we set $e_t = \{a_t\}$.
	Moreover, as in the proof of \Cref{lem:monotonic}, we set $\vf(e_t) = 0$ if $e_t = \{a_t\}$. 
	Now, we know from the proof of \Cref{lem:monotonic} that
	\begin{align*}
		\EVA{N}\left[\SW\left(\alg(G)\right)\right]
		&=\sum_{t=4}^n\prob\left(p_t\in A_{t-1}\right)\cdot\EV\left[\vf(e_t)\right]
		=\sum_{t=4}^n\left(1-p(3,t-1)\right)\cdot\EV\left[\vf(e_t)\right]\text.
	\end{align*}
	We observe that for all $t\ge 4$, it holds that
	\begin{equation*}
		1 - p(3,t-1) = 1-\frac23\left(1-\frac{(t-4)!\cdot6}{(t-1)!}\right)
		>\frac13\text.
	\end{equation*}
	
	Consequently, 
	\begin{align*}
		\EVA{N}\left[\SW\left(\alg(G)\right)\right]
		&>\frac13\sum_{t=4}^n\EV\left[\vfd(e_t)\right]\\
		&=\frac13\sum_{\underset{\sigma^{-1}(x)\ge 4}{x\in \agS}}\EV\left[\vfd(e_{\sigma^{-1}(x)})\right]\\
		&=\frac13\sum_{\underset{\sigma^{-1}(x)\ge 4}{x\in \agS}}\EV\left[\vfd(e_{\sigma^{-1}(x)})\mid\sigma^{-1}(x)\ge 4\right]\\
		&=\frac13\sum_{x\in \agS}\EV\left[\vfd(e_{\sigma^{-1}(x)})\mid\sigma^{-1}(x)\ge 4\right]\cdot\prob\left(\sigma^{-1}(x)\ge 4\right)\\
		&=\frac13\sum_{x\in \agS}\EV\left[\vfd(e_{\sigma^{-1}(x)})\mid\sigma^{-1}(x)\ge 4\right]\cdot\frac{n-3}n\\
		&=\frac{n-3}{3n}\sum_{C\in\pi^*}\sum_{x\in C}\EV\left[\vfd(e_{\sigma^{-1}(x)})\mid\sigma^{-1}(x)\ge 4\right]\\
		&=\frac{n-3}{3n}\sum_{C\in\pi^*}\sum_{x\in C}\EV\left[\vfd(e_{\sigma^{-1}(x)})\mid \agS_{\sigma^{-1}(x)-1}\cap C\neq\emptyset\land\sigma^{-1}(x)\ge 4\right]\\
		&\phantom{=}\cdot\prob\left(\agS_{\sigma^{-1}(x)-1}\cap C\neq\emptyset\mid\sigma^{-1}(x)\ge 4\right)\\
		&\ge\frac{n-3}{3n}\sum_{C\in\pi^*}\sum_{x\in C}\EV\left[\vfd(e_{\sigma^{-1}(x)})\mid \agS_{\sigma^{-1}(x)-1}\cap C\neq\emptyset\land\sigma^{-1}(x)\ge 4\right]\cdot\frac{\lvert C\rvert-1}{\lvert C\rvert}\\
		&=\frac{n-3}{3n}\sum_{C\in\pi^*}\sum_{t\in[n]:\sigma(t)\in C}\EV\left[\vfd(e_t)\mid \agS_{t-1}\cap C\neq\emptyset\land t\ge 4\right]\cdot\frac{\lvert C\rvert-1}{\lvert C\rvert}\text.
	\end{align*}
	
	For the inequality at second-to-last line, observe that $\prob\left(\agS_{\sigma^{-1}(x)-1}\cap C\neq\emptyset\right) = \frac{|C|-1}{|C|}$ because every agent in $C$ does not appear as the first agent among the agents in $C$ with that probability.
	Moreover, as having agents present already can only increase the probability, this quantity can only become larger when conditioning on $\sigma^{-1}(x)\ge 4$.
	
	Let $t\in [n]$ with $\sigma(t) \in C$. 
	We now take a closer look at $\EV\left[\vfd(e_t)\mid \agS_{t-1}\cap C\neq\emptyset\land t\ge 4\right]$.
	Essentially, this is the expected weight of the candidate edge if the arriving agent $\sigma(t)=a_t$ is not the first of $\pi^*(a_t)$ to arrive.
	We define the fractional matching $\fracmat_{C\cap \agS_t}$ on $G[C\cap \agS_t]$ by setting 
	\begin{equation}\label{eq:fracmatisland}
		\fracmat_{C\cap \agS_t}(x,x') = \fracmat_t(x,x')
	\end{equation}
	for all $x,x'\in\agS_t$ with $x,y\in C\in\pi^*$.
	
	By our assumption on the structure of the instance, $\fracmat_t$ cannot assign positive weight on edges between agents from different coalitions in $\pi^*$.
	Hence, for all $t\ge 4$, $\fracmat_{C\cap \agS_t}$ is a well-defined fractional matching on $G[C\cap \agS]$ and we have
	\begin{equation*}
		\vf(\fracmat_t)=\sum_{C\in\pi^*}\vf(\fracmat_{C\cap \agS_t})\text.
	\end{equation*}
	
	Recall that $t\in [n]$ with $\sigma(t) \in C$.
	Now, similar as in the proof of \Cref{lem:monotonic}, only restricted to a single coalition, we compute: 
	
	\begin{align}\label{eq:auxeqone}
		&\EV\left[\vfd(e_t)\mid \agS_{t-1}\cap C\neq\emptyset\land t\ge 4\right]\nonumber\\
		&=\sum_{x\in\agS_t\cap C}\EV\left[\vfd(e_t)\mid \agS_{t-1}\cap C\neq\emptyset\land t\ge 4\land a_t = x\right]\cdot\prob\left(a_t = x\mid \agS_{t-1}\cap C\neq\emptyset\land t\ge 4\right)\nonumber\\
		&=\sum_{x\in\agS_t\cap C}\EV\left[\vfd(e_t)\mid \agS_{t-1}\cap C\neq\emptyset\land t\ge 4\land x=a_t\right]\cdot\frac1{\lvert\agS_t\cap C\rvert}\nonumber\\
		&=\frac1{\lvert\agS_t\cap C\rvert}\sum_{x\in\agS_t\cap C}\sum_{x'\in\agS_t\cap C\setminus\{x\}}\EV\left[\vfd(e_t)\mid \agS_{t-1}\cap C\neq\emptyset\land t\ge 4\land x=a_t\land x'=p_t\right]\nonumber\\
		&\phantom{=}\cdot\prob\left(x'=p_t\mid \agS_{t-1}\cap C\neq\emptyset\land t\ge 4\land x=a_t\right)\nonumber\\
		&=\frac1{\lvert\agS_t\cap C\rvert}\sum_{x\in\agS_t\cap C}\sum_{x'\in\agS_t\cap C\setminus\{x\}}\vfd(x,x')\cdot\fracmat_t(x,x')\nonumber\\
		&\overset{\text{Eq.~(\ref{eq:fracmatisland})}}{=}\frac1{\lvert\agS_t\cap C\rvert}\sum_{x\in\agS_t\cap C}\sum_{x'\in\agS_t\cap C\setminus\{x\}}\vfd(x,x')\cdot\fracmat_{C\cap \agS_t}(x,x')\nonumber\\
		&=\frac2{\lvert C\cap \agS_t\rvert}\EV\left[\vf(\fracmat_{C\cap \agS_t})\right]\text.
	\end{align}

	Moreover, we know that in general, for any subgraph of $G$ induced by agent set $W\subseteq \agS$, an MWFM $\fracmat_W$ on $G[W]$ is at least as good as matching all edges with equal fractions, i.e.,
	\begin{equation}\label{eq:auxeqtwo}
		2\cdot\vf(\fracmat_{W})
		\ge\sum_{x\in W}\sum_{x'\in W\setminus\{x\}}\frac1{|W|-1}\vfd(x,x')
	\end{equation}
	
	For a subset $W\subseteq \agS$, let $U(W)$ denote the uniform distribution over ordered pairs of distinct agents in $W$, i.e., edges in the subgraph induced by $W$. 
	Combining the previous insights, we get
	\begin{align*}
		&\EV\left[\vfd(e_{t})\mid \agS_{t-1}\cap C\neq\emptyset\land t\ge 4\right]
		\overset{\text{Eq.~(\ref{eq:auxeqone})}}{=}\frac2{\lvert C\cap \agS_{t}\rvert}\EV\left[\vf(\fracmat_{C\cap \agS_{t}})\right]\\
		&\overset{\text{Eq.~(\ref{eq:auxeqtwo})}}{\ge}\frac1{\lvert C\cap \agS_{t}\rvert}\EV\left[\sum_{x\in C\cap \agS_{t}}\sum_{x'\in C\cap \agS_{t}\setminus\{x\}}\frac1{\lvert C\cap \agS_{t}\rvert-1}\vfd(x,x')\right]\\
		&=\frac1{\lvert C\cap \agS_{t}\rvert}\frac{(\lvert C\cap \agS_{t}\rvert-1)\lvert C\cap \agS_{t}\rvert}{\lvert C\cap \agS_{t}\rvert-1}\EV_{e\sim U(C)}\left[\vfd(e)\right]\\
		&=\EV_{e\sim U(C)}\left[\vfd(e)\right]\text.
	\end{align*}
	
	There, the second-to-last equality holds because $C\cap \agS_{t}$ is a subset of $C$ drawn uniformly at random,
	and hence drawing a uniformly random valuation from an ordered pair in $C\cap \agS_{t}$
	is equivalent to drawing a random valuation from an ordered pair in $C$ directly.
	All in all,
	\begin{align*}
		\EVA{N}\left[\SW\left(\alg(G)\right)\right]
		&\ge\frac{n-3}{3n}\sum_{C\in\pi^*}\sum_{t\in[n]:\sigma(t)\in C}\EV\left[\vfd(e_{t})\mid \agS_{t-1}\cap C\neq\emptyset\land t>3\right]\cdot\frac{\lvert C\rvert-1}{\lvert C\rvert}\\
		&\ge\frac{n-3}{3n}\sum_{C\in\pi^*}\sum_{t\in[n]:\sigma(t)\in C}\EV_{e\sim U(C)}\left[\vfd(e)\right]\cdot\frac{\lvert C\rvert-1}{\lvert C\rvert}\\
		&=\frac{n-3}{3n}\sum_{C\in\pi^*}\lvert C\rvert\cdot\EV_{e\sim U(C)}\left[\vfd(e)\right]\cdot\frac{\lvert C\rvert-1}{\lvert C\rvert}\\
		&=\frac{n-3}{3n}\sum_{C\in\pi^*}\EV_{e\sim U(C)}\left[\vfd(e)\right]\cdot\left(\lvert C\rvert-1\right)\\
		&=\frac{n-3}{3n}\sum_{C\in \pi^*}\SW\left(C\right)\\
		&=\frac{n-3}{3n}\SW\left(\pi^*\right)\text.
	\end{align*}
	This proves the claimed competitive ratio:
	\begin{equation*}
		\frac{\EVA{N}\left[\SW\left(\alg(G)\right)\right]}{\SW\left(\pi^*\right)}
		\ge\frac{n-3}{3n}
		=\frac13-\frac1n
	\end{equation*}
\end{proof}

\subsection{Limitations of Algorithm Based on Integral Matchings} 
\label{app:518}

In this section, we consider \Cref{algorithm:onlinemwm} for the case where we modify \cref{ln:independenceMWFM} to select an MWM instead of an MWFM\@.
We select an MWM that only depends on the set of present agents by applying minuscule perturbations of the valuations that act as a tie-breaking mechanism among integral matchings.
This is essentially analogous to the online matching algorithm by \citet[Algorithm~1]{EFGT22a} for the matching setting when the number of agents is known.\footnote{Our next result also holds for the version of their algorithm for the case when $k = 3$, which is optimal for online matching with an unknown number of agents, cf. \Cref{thm:online_mwm_unknown_n}.}
However, another difference is that they want the maximum weight matchings to be perfect and, therefore, need an even number of vertices.
They achieve this by deleting a uniformly random vertex whenever the number of vertices is odd.
We do not face this issue as we deal with the case of being unmatched by making vertices unavailable based on the probabilities $\fracmat(x,x)$ for an agent $x$ and a fractional matching $\fracmat$.

\begin{proposition}\label{prop:518}
	Let $\alg$ be \Cref{algorithm:onlinemwm} where we modify \cref{ln:independenceMWFM} to select an MWM as described above.
	$\alg$ has a competitive ratio under random arrival of at most $\frac{5}{18} < \frac 13$, even for simple symmetric FHGs.
\end{proposition}
\begin{proof}
	Let $\alg$ be the algorithm from the statement of the proposition.
	Let $k\in \mathbb N$.
	We define an FHG $G_k = (N,\vf)$ with $n = 3k$ agents.
	Let $N = \{a_i,b_i,c_i\colon i\in [k]\}$.
	The valuations are given as $\vf(a_i,b_i)=\vf(a_i,c_i)=\vf(b_i,c_i)=1$ for all $i\in [k]$ and all other valuations, i.e., the valuations across triplets of agents, are set to~$0$.
	Hence, the underlying graph is a disjoint union of triangle graphs.
	Clearly, this constitutes a simple symmetric FHG\@.

	Let now $i\in[k]$ be arbitrarily fixed, i.e., we
	look at a single triangle.
	Assume that agent $a_i$ arrives at time $t_a$, agent $b_i$
	arrives at time $t_b$, and $c_i$ arrives at time $t_c$.
	Without loss of generality, assume that $t_a<t_b<t_c$.
	In the algorithm,
	the edge weights are randomly perturbed to
	always guarantee a unique MWM\@.
	When $a_i$ arrives, the candidate edge $e_{t_a}$ which may be
	added by the algorithm will have weight $0$ (or $a_i$ is left unmatched).
	When $b_i$ arrives, the candidate edge $e_{t_b}$ will
	be the edge $\{a_i,b_i\}$ and hence have weight $1$.
	When $c_i$ arrives, the random perturbation leads to a $\frac13$ chance the
	candidate edge $e_{t_c}$ will be $\{a_i,c_i\}$,
	a $\frac13$ chance it will be $\{b_i,c_i\}$, and
	a $\frac13$ chance it will have weight $0$ (or that the algorithm lets $c_i$ unmatched in an MWM) because $\{a_i,b_i\}$
	is the heaviest of the three edges in the triangle.

	Moreover, whether an edge is included also depends on whether the other agent in this edge is available.
	This probability is given as in the formula of \Cref{lem:twothirdsmatched}.\footnote{The proof of the lemma works when replacing the MWFM by an MWM in every occurrence.}
	Hence, in the limit when $k$ (and, therefore, $n$) tends to infinity, the partner of the candidate edge is available with probability $\frac 13$.
	Thus, in the limit, the expected social welfare of
	the algorithm's output obtained by $a_i$, $b_i$, and $c_i$ is
	\begin{equation*}
		\frac13\left(0+1+\frac23\cdot1\right)
		=\frac59\text.
	\end{equation*}
	Hence, we have that $\lim_{k\to \infty}\EV(\alg(G_k)) = \frac 59 k$.
	
	However, we have that $\pi^*(G_k) = \{\{a_i,b_i,c_i\}\colon i\in [k]\}$ with $\SW(\pi^*(G_k)) = 2k$.
	Hence, 
	$$c_{\alg} \le \lim_{k\to \infty}\frac{\EV(\alg(G_k))}{\SW(\pi^*(G_k))} = \frac {\frac 59k}{2k} = \frac 5{18}\text.$$
	This completes the proof.
\end{proof}

The reason why \Cref{algorithm:onlinemwm} does not run into the same issues is that selecting an MWFM allows to match each of the three edges within a triangle with probability $\frac 12$.
Hence, the social welfare achieved by the third agent in a triangle is $1$ and not $\frac 23$, leading to a higher welfare.
Moreover, note that the instances constructed in \Cref{prop:518} also show that \Cref{algorithm:onlinemwm} can be at most $\frac 13$-competitive, i.e., they determine its asymptotic competitive ratio.
This already hints at the much stronger result of \Cref{cor:RAcofoBound}.

\subsection{Proof of Theorem~\ref{thm:negativeresult_MWMrand}}\label{app:randomarrival:limitmain}

In this section, we provide the proof of \Cref{thm:negativeresult_MWMrand}.
To make it accessible more quickly, we defer the proofs of intermediary lemmas to \Cref{app:randomarrival:limit}.

The first part of our proof concerns showing that a good competitive ratio on a star is essentially equivalent to matching the maximum weight edge with a high probability.
This is similar to the conversion of the problem from a cardinal to an ordinal setting as performed by \citet{EFGT22a}.
We then want bounds for the probability of matching any edge in a star instance, only dependent on the already arrived vertices.
While \citet{EFGT22a}, inspired by \citet{CDFS19a}, carry out such a step by applying an infinite version of Ramsey's theorem, we perform a direct computation of the probabilities using induction.
Still, our proof is quite different from both of these as we have the interplay of two qualitatively different sets of adversarial instances.

\MWMrandNegative*

\begin{proof}
In the following proof, we assume that all algorithms are randomized and operate under random arrival.
Assume for contradiction that there is an online matching algorithm 
on the tree domain with a competitive ratio of $\ctree > \frac 13$. Without loss of generality, we can assume that $\ctree$ is rational. Otherwise, we replace it with any rational number in the open interval $(\frac 13, \ctree)$. 
	Define $\epsilon := \frac 13 \left(\ctree-\frac 13\right)$.
	Note that $\epsilon > 0$ and $\epsilon$ is rational.

Let $I, J \subseteq \mathbb{N}$ with $|I|, |J| < \infty$, $I\cap J = \emptyset$ and $I\neq \emptyset$, i.e., they are finite and disjoint, and $I$ is nonempty. 
We design a family of instances with $n = 2 + \lvert I \rvert + \lvert J \rvert$ agents based on two symmetric valuation functions, one for stars and one for bi-stars, dependent on $I, J$.
Additionally, the instance depends on a value for weights of negative edges, parameterized by $x$.
Given such $I$ and $J$, we define $\bstarmax := \max(I \cup J)$, i.e., $\bstarmax$ is the largest number in $I \cup J$. 
We arbitrarily select an integer $x > \bstarmax + 2$. 
Let $N = \{a, b\} \cup \{d_i \colon i \in I\} \cup \{d_j\colon j \in J\}$ be the set of agents.

First, we define a \emph{star instance} $S^{x}_{I,J}$ by setting the following symmetric valuations:\footnote{We omit references to parameters from the names of the valuation functions to avoid overloading notation.}
For all $i \in I$, we set $\vfd(a,d_i) = \vb^i$. 
All remaining valuations are set to $-\vb^x$.
We set $\starmax := \max I$, i.e., the edge of maximum weight is $\{a,d_{\starmax}\}$ with a weight of $\vb^{\starmax}$.
Note that $\starmax > 0$ as $I\neq \emptyset$.

Moreover, we define a \emph{bi-star instance} $B^{x}_{I,J}$ with the following symmetric valuations:
Recall that $\bstarmax = \max(I \cup J)$. 
For all $i \in I$ and $j\in J$, we set $\vfd(a,d_i) = \vb^i$ and $\vfd(b,d_j) = \vb^j$.
We set $\vfd(a,b) = 
		\vb^{\bstarmax + 1}$.
		Finally, all remaining valuations are set to $-\vb^x$.
		Note that the pair $\{a,b\}$ has the highest valuation of $\vb^{\bstarmax + 1}$.
Note that, since $\epsilon$ is rational, all valuations in star and bi-star instances are rational.

Hence, given the same set of parameters, a star and bi-star instance only differ with respect to the valuations of $b$ with $a$ and agents in $\{d_j\colon j\in J\}$.
We denote the set of all star instances with any permissible parameter combination of $I$, $J$, $x$, and $\epsilon$ as $\starins$. 
Similarly, we denote the set of all bi-star instances as $\biins$.

Note that the algorithm can only distinguish star and bi-star instances once $a$ and $b$ have arrived in a bi-star instance.
In fact, once $a$ has arrived in a star instance, or one of $a$ and $b$ has arrived in a bi-star instance, an algorithm sees the star with one of these agents.
However, all other agents, and in particular $b$ if we are in a star instance, are only connected by large constant negative valuations and are indistinguishable.
Furthermore, the optimal matching for star instances matches $\{a, d_{\starmax}\}$ and leaves all other agents alone with a social welfare of $\vb^{\starmax}$.
Similarly, in bi-stars, the optimal matching matches $\{a, b\}$ and leaves all other agents as singletons with a social welfare of $\vb^{\bstarmax + 1}$.

Additionally, by the choice of $x$, both types of instances belong to the tree domain.
Indeed, positive valuations are $\vb^i$ for some $i \le x - 2$ and occur at most once each.
Moreover, as $\ctree\in (\frac 13, 1)$, it holds that $\epsilon = \frac 13(\ctree-\frac 13) \le \frac 12$.
Hence, we have that the sum of positive valuations is at most $\sum_{i = 1}^{x-2}\vb^i \le \vb^{x-1} < \vb^x$.

Given a fixed algorithm, 
we want to find a relationship between its competitive ratio and the probability of matching the highest edge in star and bi-star instances.
We say that an algorithm is \emph{$c$-competitive for matching the maximum weight edge} if it matches the maximum weight edge with probability at least $c$ in star and bi-star instances.
We obtain the following relationship.
Its proof relies on a separate analysis of stars and bi-stars.

\begin{restatable}{lemma}{maxedgeconversion}\label{lem:maxedgeconversion}
	If there exists a $\ctree$-competitive online matching algorithm on the tree domain, 
	then there exists an algorithm for matching the maximum weight edge with a competitive ratio of more than $\frac 13$.
\end{restatable}

Hence, to derive a contradiction to our initial assumption of 
a $\ctree$-competitive online matching algorithm, we prove in the following the nonexistence of algorithms for matching the maximum weight edge with probability more than $\frac 13$.
In the following steps, we want to achieve certain conditions under which our algorithms operate without loss of generality.
This is similar to the reduction by \citet{EFGT22a} to an ``ordinal'' setting.
As a first step, we observe that we can restrict attention to algorithms that, if at all, match the current maximum weight edge in each step.

\begin{lemma}\label{claim:only_max}
	For every star instance, we may assume without loss of generality that only the current maximum weight edge and no negative weight edges are matched.
\end{lemma}

\begin{proof}
	Consider an 
	algorithm $\alg$ for matching the maximum weight edge.
	We modify this algorithm such that whenever it performs a randomized decision to match an edge, it sets probabilities to~$0$ for matching edges that are not currently the maximum weight edge or have negative weight.
	It then continues executing $\alg$ as if the decision of $\alg$ had been performed.
	This algorithm has the desired form, i.e., it only matches the current maximum weight edge and no negative weight edges.
	Moreover, since negative weight edges and edges that are not currently the maximum weight are never the maximum weight edge in star and bi-star instances, the modified algorithm matches the maximum weight edge with the same probability.
\end{proof}

Consequently, we can restrict attention to algorithms that, at each step, face the decision to match the current maximum weight edge, if possible, or do nothing.
From now on, we will only consider such algorithms.

We go one step further and show that when a matching decision is performed (to match a current maximum weight edge), this can be assumed to be independent of how the current state is achieved.

\begin{restatable}{lemma}{HistoryIndependence}\label{claim:history_independent}
	For every star instance, we may assume without loss of generality that our algorithm's decisions only depend on	
	\begin{itemize}
		\item which agents have arrived,
		\item whether $a$ has arrived and is matched, and
		\item whether the last arrived agent is part of the current maximum weight edge.
	\end{itemize}
\end{restatable}

From now on, we consider algorithms as per \Cref{claim:history_independent}.
Finally, we show that algorithmic decisions can be made independently of $b$ and agents associated with $J$.

\begin{restatable}{lemma}{Jindependence}\label{lem:Jindependence}
	For every star instance, we may assume without loss of generality that our algorithms decisions are independent of agents $b$ and agents associated with $J$.
\end{restatable}

From now on, we consider algorithms that, additionally, fulfill the independence of decisions of $b$ and agents associated with $J$. 

The combination of 
\Cref{claim:history_independent,lem:Jindependence} implies that 
an algorithm is fully specified by the matching probabilities dependent on the observed weights but not the arrival orders. 
From now on, we consider a fixed algorithm $\beste$ 
and assume for contradiction that it is $\cmax$-competitive for matching the maximum weight edge with $\cmax > \frac 13$.
It is fully specified by a function $f\colon 2^{\mathbb N}\times \mathbb N\to[0,1]$, where $f$ takes as input a subset $I\subseteq \mathbb N$ (specifying the leaf weights in a star instance) and a positive integer $x$ (specifying the parameter for negative edges).
The value $f(I,x)$ equals the probability of matching the current maximum weight edge provided that $a$ has arrived, is unmatched, the last arrived agent is part of the maximum edge, $a$ has revealed edges precisely to agents corresponding to the set $I$, and $x$ is the parameter for negative edges.

Now, consider a star instance $S \in \starins$ based on parameters $I$, $J$, and $x$ (at this point, $\epsilon$ is irrelevant).
We define 
	\begin{equation*}
		\pmatched(\hrinput) := 
			\prob\left(\{a, d_i\} \in \beste(S)\text{ for some }i\in I\right)\text,
	\end{equation*}
	 
	i.e., the probability to match~$a$.
	The key step is to estimate this quantity.

\begin{restatable}{lemma}{refinedrecursion}\label{lem:refinedrecursion}
	Let $S\in \starins$ with $|I| = k-1$. 
	Then it holds that $\pmatched(\hrinput) > \frac{2}{3} - \frac{2}{3k}$ for all $S \in \starins$.
\end{restatable}

Finally, we want to use the performance on stars to bound the performance on bi-stars.
We essentially use that the prefix of every arrival order in every bi-star is indistinguishable from a star instance until both $a$ and $b$ arrive.

Consider a bi-star instance $B \in \mathcal{B}$ defined by $I$, $J$, and $x$ 
($x$ defines its negative weights), and assume that $|I| = |J|$.
As usual, the number of agents is $n$, i.e., $n = 2 + \lvert I \rvert + \lvert J \rvert$.
Let $Y$ be the random variable that counts the number of agents from $I$ that arrive before~$b$ if $b$ arrives after $a$ and the number of agents from $J$ that arrive before $a$ if $a$ arrives after $b$.
Moreover, let $Y_I$ be the random variable that counts the number of agents from $I$ that arrive before~$b$ and $Y_J$ be the random variable that counts the number of agents from $J$ that arrive before~$a$.

We compute 
\begin{align*}
	&\prob(\{a,b\}\in \beste(B) \mid Y \ge y)\\
	&= \prob(\{a,b\}\in \beste(B) \mid Y \ge y,\sigma^{-1}(a) < \sigma^{-1}(b)) \cdot \prob(\sigma^{-1}(a) < \sigma^{-1}(b)\mid Y \ge y)\\
	&\phantom{=} + \prob(\{a,b\}\in \beste(B) \mid Y \ge y,\sigma^{-1}(b) < \sigma^{-1}(a)) \cdot \prob(\sigma^{-1}(b) < \sigma^{-1}(a)\mid Y \ge y)\\
	& = \frac 12 \cdot \prob(\{a,b\}\in \beste(B) \mid Y_I \ge y,\sigma^{-1}(a) < \sigma^{-1}(b))\\
	&\phantom{=} + \frac 12 \cdot \prob(\{a,b\}\in \beste(B) \mid Y_J \ge y,\sigma^{-1}(b) < \sigma^{-1}(a))\\
	& = \prob(\{a,b\}\in \beste(B) \mid Y_I \ge y,\sigma^{-1}(a) < \sigma^{-1}(b))
\end{align*}
In the last step, we use symmetry between $a$ and $b$ together with $I$ and $J$, which works because $|I| = |J|$.
We thus want to estimate the latter probability.

Note that if $a$ arrives before $b$, then the agents arriving before $b$ form a star instance where the subset of agents of $I$ that has arrived is a uniformly random subset of size $Y_I$.
Hence, by \Cref{lem:refinedrecursion}, we have that 
\begin{align*}
	&\prob(a \text{ matched when }b\text{ arrives}\mid Y_I\ge y,\sigma^{-1}(a) < \sigma^{-1}(b)) > \frac 23 - \frac 2{3 {(y+1)}}\text.
\end{align*}
There, we bound with the worst case where $Y_I = y$, i.e., $k = y + 1$ in \Cref{lem:refinedrecursion}.
It follows that
\begin{small}
\begin{align*}
	&\prob(\{a,b\}\in \beste(B) \mid Y_I \ge y,\sigma^{-1}(a) < \sigma^{-1}(b)) \\
	&\le 1 - \prob(a \text{ matched when }b\text{ arrives}\mid Y_I \ge y,\sigma^{-1}(a) < \sigma^{-1}(b)) < \frac 13 + \frac 2{3 {(y+1)}}\text.
\end{align*}
\end{small}

Clearly, there exists $N\in \mathbb N$ such that for all $y\ge N$, it holds that $\frac{2}{3(y+1)} \le \frac 13 \left(\cmax-\frac 13\right)$.
Together, for all $y\ge N$, we obtain that
\begin{equation}\label{eq:largeN}
	\prob(\{a,b\}\in \beste(B) \mid Y \ge y) < \frac 13 + \frac 13 \left(\cmax-\frac 13\right)\text.
\end{equation}

Second, we want to estimate $\prob(Y < N)$.
Clearly, whenever $Y<N$, then we have that $Y_I < N$ or $Y_J < N$. 
Hence, by a union bound,
\begin{align}
	&\prob(Y < N) \le \prob(Y_I < N\text{ or } Y_J < N) \le \prob(Y_I < N) + \prob(Y_J < N) = 2\prob(Y_I < N)\label{eq:YtoYI}
\end{align}

We now want to bound $\prob(Y_I < N)$. 
Note that $b$ arrives with equal probability in every fixed position among the agents in $I\cup \{b\}$. 
Hence, $Y_I < N$ to happen is equal to $b$ arriving in a position in $\{1,\dots, N\}$ among $I\cup \{b\}$.
We conclude that
	$\prob(Y_I < N) = \frac N{\frac n2} = \frac {2N}n\text.$

Note that this tends to~$0$ for $n$ tending to infinity.
Therefore, there exists $N'\ge N$ such that $\prob(Y_I < N) \le \frac 16 \left(\cmax-\frac 13\right)$ for all $n\ge N'$.
Combining this with \Cref{eq:YtoYI}, for all $n\ge N'$, we obtain
\begin{equation}\label{eq:smallN}
	\prob(Y < N) \le \frac 13 \left(\cmax-\frac 13\right)\text.
\end{equation}
For $n\ge N'$, we conclude that 
\begin{align*}
	&\prob(\{a,b\}\in \beste(B))\\
	& = \prob(\{a,b\}\in \beste(B) \mid Y < N)\prob(Y<N) +  \prob(\{a,b\}\in \beste(B) \mid Y \ge N)\prob(Y \ge N)\\
	&\le \prob(Y<N) + \prob(\{a,b\}\in \beste(B) \mid Y \ge N)
	\overset{\text{Eqs.~(\ref{eq:largeN},\ref{eq:smallN})}}\le \frac 13 \left(\cmax-\frac 13\right) + \left(\frac 13 + \frac 13 \left(\cmax-\frac 13\right)\right)\\
	& \le \frac 13 + \frac 23\left(\cmax-\frac 13\right) = \frac 23\cmax + \frac 19 < \cmax\text.
\end{align*}

This contradicts our assumption that $\beste$ was $\cmax$-competitive.
\end{proof}

\subsection{Lemmas in the Proof of Theorem~\ref{thm:negativeresult_MWMrand}}\label{app:randomarrival:limit}

In this section, we prove auxiliary lemmas in the proof of \Cref{thm:negativeresult_MWMrand}, restated as follows.

\MWMrandNegative*

We start with the proof of \Cref{lem:maxedgeconversion}.
Its proof relies on two auxiliary statements concerning stars and bi-stars.

We first consider stars and want to estimate $\inf_{S \in \starins} \prob(\{a, d_{\starmax}\} \in \gentreealg(S))$, i.e., the infimum of the probability with which the maximum weight edge is matched in stars, where $\gentreealg$ is a $\ctree$-competitive online matching algorithm.

\begin{lemma}\label{lem:cboundstar}
	 Let $\gentreealg$ be a $\ctree$-competitive online matching algorithm.
	 Then it holds that 
	 $$\inf_{S \in \starins} \prob(\{a, d_{\starmax}\} \in \gentreealg(S)) \ge \ctree - \epsilon\text.$$ 
\end{lemma}

\begin{proof}
	Let $\gentreealg$ be a $\ctree$-competitive online matching algorithm.
	Consider some star instance $S \in \starins$.
	By definition of the competitive ratio, it holds that 
	\begin{equation*}
		\frac{\EV[\SW(\gentreealg(S))]}{\SW(\pi^*(S))} = \frac{\EV[\SW(\gentreealg(S))]}{\vb^{\starmax}} \ge \ctree \text,
	\end{equation*}
	where $\pi^*(S)$ denotes the maximum weight matching.
	We compute
		\begin{align*}
			\ctree \cdot\vb^{\starmax} &\le \EV[\SW(\gentreealg(S))] 
			= \sum_{x, y \in N} \prob(\{x, y\} \in \gentreealg(S))\vfd(x,y) \\
			&\le \sum_{i \in I} \prob(\{a, d_i\} \in \gentreealg(S))\vfd(a,d_i) \\
			&= \sum_{i \in I \setminus \{\starmax\}} \prob(\{a, d_i\} \in \gentreealg(S))\vb^i 
			 +  \prob(\{a, d_{\starmax}\} \in \gentreealg(S))\vb^{\starmax}
		\end{align*}
	In the second line, we express the expectation over matchings in terms of single edges.
	The third line follows from the fact that only the valuations between $a$ and the agents associated with $I$ are positive.
	Dividing both sides by $\vb^{\starmax} > 0$, we get
		\begin{align*}
			\ctree &\le  \prob(\{a, d_{\starmax}\} \in \gentreealg(S)) 
			+ \sum_{i \in I \setminus \{\starmax\}} \prob(\{a, d_i\} \in \gentreealg(S))\frac{\vb^i}{\vb^{\starmax}}\\
			&\le \prob(\{a, d_{\starmax}\} \in \gentreealg(S)) 
			+ \sum_{i \in I \setminus \{\starmax\}} \prob(\{a, d_i\} \in \gentreealg(S))\epsilon \\
			&\le \prob(\{a, d_{\starmax}\} \in \gentreealg(S)) + \epsilon \text.
		\end{align*}
	The last inequality follows since $\prob(\{a, x\}  \in \gentreealg(S))$ for $x \in N$ forms a probability distribution since $a$ cannot be matched with probability more than $1$.
	Since $S\in \starins$ was chosen arbitrarily, we obtain $\inf_{S \in \starins} \prob(\{a, d_{\starmax}\} \in \gentreealg(S)) \ge \ctree - \epsilon$.
\end{proof}

Next, we show that $\ctree - 2\epsilon$ is a lower bound on the probability with which 
a $\ctree$-competitive online matching algorithm 
matches the two centers in bi-star instances.
The proof is similar to that of \Cref{lem:cboundstar}.

\begin{lemma}\label{lem:cboundbistar}
	Let $\gentreealg$ be a $\ctree$-competitive online matching algorithm.
	Then it holds that 
	$$\inf_{B \in \biins} \prob(\{a, b\} \in \gentreealg(B)) \ge \ctree - 2\epsilon\text.$$ 
\end{lemma}

\begin{proof}
	Consider a bi-star instance $B \in \biins$.
	Then, by definition of the competitive ratio, it holds that
	\begin{equation*}
		\frac{\EV[\SW(\gentreealg(B))]}{\SW(\pi^*(B))} = \frac{\EV[\SW(\gentreealg(B))]}{\vb^{\bstarmax}} \ge \ctree \text,
	\end{equation*}
	where $\pi^*(B)$ denotes the maximum weight matching.
	We compute
		\begin{align*}
		\ctree \cdot \vb^{\bstarmax + 1} &\le \EV[\SW(\gentreealg(B))] 
		= \sum_{x, y \in N} \prob(\{x, y\} \in \gentreealg(B))\vfd(x,y) \\
			&\le \sum_{i \in I} \prob(\{a, d_i\} \in \gentreealg(B))\vfd(a,d_i) 
			+ \sum_{j \in J} \prob(\{b, d_j\} \in \gentreealg(B))\vfd(b,d_j) \\
			&\phantom{=}+ \prob(\{a, b\} \in \gentreealg(B))\vfd(a,b) \\
			&= \sum_{i \in I} \prob(\{a, d_i\} \in \gentreealg(B)\vb^i 
			+ \sum_{j \in J} \prob(\{b, d_j\} \in \gentreealg(B))\vb^j\\
			&\phantom{=} + \prob(\{a, b\} \in \gentreealg(B)) \vb^{\bstarmax + 1}
		\end{align*}
	In the second line, we express the expectation over matchings in terms of single edges.
	In the subsequent step, we omit edges with large negative weight.
	Dividing both sides by $\vb^{\bstarmax + 1} > 0$, we get
		\begin{align*}
			\ctree &\le  \prob(\{a, b\} \in \gentreealg(B)) 
			+ \sum_{i \in I} \prob(\{a, d_i\} \in \gentreealg(B))\frac{\vb^i}{\vb^{\bstarmax + 1}} 
			+ \sum_{j \in J} \prob(\{b, d_j\} \in \gentreealg(B))\frac{\vb^j}{\vb^{\bstarmax + 1}}\\
			&\le \prob(\{a, b\} \in \gentreealg(B)) 
			+ \sum_{i \in I} \prob(\{a, d_i\} \in \gentreealg(B))\epsilon 
			+ \sum_{j \in J} \prob(\{b, d_j\} \in \gentreealg(B))\epsilon \\
			&\le \prob(\{a, b\} \in \gentreealg(B)) + 2\epsilon\text.
		\end{align*}
	The third inequality follows since $\prob(\{a, x\}  \in \gentreealg(B))$ and $\prob(\{b, x\}  \in \gentreealg(B))$ for $x \in N$ form probability distributions since $a$ and $b$ cannot be matched with probability more than one.
	Since $B \in \biins$ was chosen arbitrarily, we obtain $\inf_{B \in \biins} \prob(\{a, b\} \in \gentreealg(B)) \ge \ctree - 2\epsilon$.
\end{proof}

We can combine \Cref{lem:cboundstar,lem:cboundbistar} to transition to the goal of proving that there is no algorithm matching the maximum weight edge that is better than $\frac 13$-competitive.

\Cref{lem:cboundstar,lem:cboundbistar}, we can transition to the goal to prove that there is no algorithm matching the maximum weight edge that is better than $\frac 13$-competitive.

\maxedgeconversion*
\begin{proof}
	Assume that there is a $\ctree$-competitive online matching algorithm 
	on the tree domain. 
	Consider $c' = \ctree - 2\epsilon$. 
	Since $\epsilon = \frac 13(\ctree - \frac 13)$, it holds that $c' = \frac 13 \ctree - \frac 29 > \frac 13$.
	By \Cref{lem:cboundstar,lem:cboundbistar}, $\alg$ is $c'$-competitive for matching the maximum weight edge.
\end{proof}

Next, we prove our lemma concerning history independence.

\HistoryIndependence*
\begin{proof}
	Consider an algorithm $\alg$ restricted as per \Cref{claim:only_max}.
	We transform this algorithm as follows:
	Consider the arrival of an agent and assume that the algorithm wants to match with positive probability.
	This means that the currently arrived agent is $a$ or the agent of the maximum weight edge.
	Assume that, so far, agents in the set $A$ have arrived.
	Let $H(A)$ be the history of the algorithm so far, which captures the arrival order of agents in $A$ as well as all previous algorithmic decisions.
	Let $\mathcal H(A)$ be the set of all histories where the agents in $A$ arrive such that the last arrived agent is part of the current maximum weight edge, and $a$ is unmatched at the arrival of the last agent.
	
	We obtain a new algorithm $\alg'$ as follows.
	Upon the arrival of an agent that leads to a matching decision in $\alg$ involving agents $A$, the algorithm $\alg'$ ignores the history $H(A)$.
	Instead, it samples a history $H'(A)\sim \mathcal H(A)$ according to the probabilities of this history occurring in $\alg$.\footnote{Note that we are not concerned about the computational complexity for designing this algorithm. Instead, we simply define an algorithm based on the potential randomizations of $\alg$. Note that this technique even applies when $\alg$ is an ``inefficient'' algorithm, i.e., performs computations of any length.}
	Note that this is well-defined as we are operating on a finite game, for which there is only a finite set of histories, and the probabilities of each of the histories occurring only depends on algorithmic (randomized) decisions on all possible histories.
	Then, it matches the current maximum weight edge if and only if $\alg$ would do so given the history $H'(A)$.
	
	By design, we have that $\alg'$ performs for $H(A)$ like $\alg$ performs for $H'(A)$.
	Moreover, the distribution of the sampled histories is identical to the distribution of the real histories.
	Hence, the performance of $\alg'$ in terms of matching the maximum weight edge is identical to the performance of $\alg$.
	However, the decisions of $\alg'$ only depend on the set of agents that has arrived, whether $a$ has arrived and is matched, and whether the last agent is part of the current maximum weight edge. 
\end{proof}

Now, we prove that decisions can be assumed to be independent of $b$ and $J$.

\Jindependence*

\begin{proof}
	Consider an algorithm $\alg$ restricted as per \Cref{claim:only_max}.
	Then, $\alg$ never matches a negative weight edge. 
	Hence, the first matching decision can happen when $a$ arrives, and subsequently, $\alg$ can only match the current maximum weight edge.
	Moreover, once $a$ has arrived, it is revealed which present agents belong to $I$.
	We transform $\alg$ so that every matching decision if it is still possible to match, is made as if $b$ and agents associated with $J$ have not yet arrived.
	In other words, $\alg$ behaves on a star instance with respect to parameters $I$, $J$, $x$, and $\epsilon$, as if $J$ was the empty set.
	Note that the case of the same instance where $J$ really is the empty set is another star instance, and it achieves the same performance as $\alg$ achieved on this instance.
	Hence, its competitive ratio can only improve as it now only depends on a smaller set of star instances.
\end{proof}

Finally, we provide the bound on $\pmatched(\hrinput)$.

\refinedrecursion*

\begin{proof}
	Given a star $S\in \starins$, we additionally define 
	$$\pmatchbest(\hrinput) := \prob(\{a, d_{\starmax}\} \in \beste(S))$$ 
	for all $S \in \starins$, i.e., the probability to match $a$ with $d_{\starmax}$. 
	
	We now show recursive formulas for $\pmatched(\hrinput)$ and $\pmatchbest(\hrinput)$ assuming that we are given a star instance $S\in \starins$ with $|I| = k-1$.
	To this end, we partition all arrival orders in $\Sigma(\{a\}\cup\{d_i\colon i\in I\})$, i.e., of the agents relevant to matching, into three sets based on the last arriving agent.
	The first two are the arrival orders~$\sigma$ 
	in which $a$ or $d_{\starmax}$ arrive last, i.e., $\sigma(\recursepara) = a$ or $\sigma(\recursepara) = d_{\starmax}$, respectively.
	They each make up a $\frac{1}{\recursepara}$ fraction of all arrival orders, i.e., $\prob(\sigma(\recursepara) = a) = \frac{1}{\recursepara}$ and $\prob(\sigma(\recursepara) = d_{\starmax}) = \frac{1}{\recursepara}$.
	In the remaining orders, one of the other alternatives arrives last. 
	We have $\prob(\sigma(\recursepara) \ne a \land \sigma(\recursepara) \ne d_{\starmax}) = \frac{\recursepara - 2}{\recursepara}$.
	Note that for $i \ne \starmax$, if $d_i$ arrives last, then the algorithm cannot match, so it is matched only if it has matched already.
	Furthermore, if $d_{\starmax}$ arrives last, then we need to consider two cases.
	Either $a$ could already be matched or if it is unmatched then we match with probability $f(I,x)$.
	Finally, if $a$ arrives last, then we match with probability $f(I,x)$.

	\begin{align*}
		\pmatched(\hrinput) &= \prob(\{a, d_i\} \in \beste(S)\text{ for some }i\in I)\\
		&= \prob(\{a, d_i\} \in \beste(S)\text{ for some }i\in I \vert \sigma(\recursepara) \neq a \land \sigma(\recursepara) \neq d_{\starmax}) 
		\cdot \prob(\sigma(\recursepara) \neq a \land \sigma(\recursepara) \neq d_{\starmax})\\
		&\phantom{=}+ \prob(\{a, d_i\} \in \beste(S)\text{ for some }i\in I \vert \sigma(\recursepara) = d_{\starmax})\prob(\sigma(\recursepara) = d_{\starmax}) \\ 
		&\phantom{=}+ \prob(\{a, d_i\} \in \beste(S)\text{ for some }i\in I \vert \sigma(\recursepara) = a)\prob(\sigma(\recursepara) = a) \\
		&= \frac{1}{\recursepara}\sum_{i \in I \setminus \{d_{\starmax}\}} \pmatchedi{\recursepara - 1}(\hrinputpara{d_{i}}) 
		+ \frac{1}{\recursepara}\left[\pmatchedi{\recursepara - 1}(\hrinputpara{d_{\starmax}}) + (1 - \pmatchedi{\recursepara - 1}(\hrinputpara{d_{\starmax}}))f(I,x)\right] 
		+ \frac{1}{\recursepara}f(I,x) \\
		&= \frac{1}{\recursepara}\sum_{i \in I} \pmatchedi{\recursepara - 1}(\hrinputpara{d_{i}})  - \frac{f(I,x)}{\recursepara}\pmatchedi{\recursepara - 1}(\hrinputpara{d_{\starmax}}) + \frac{2f(I,x)}{\recursepara}
	\end{align*}
	
	Furthermore, we have $\pmatchedi{2}(S^{x}_{\{d_i\}, J}) =  f(\{a, d_i\},x)$ for all $i \in I$ and the star where the only leaf from $a$ is towards $d_i$ and $J$ is arbitrary.
	We continue by calculating our second term.
	We have
	\begin{align*}
		\pmatchbest(\hrinput) &=  \prob(\{a, d_{\starmax}\} \in \beste(S))\\
		&= \prob(\{a, d_{\starmax}\} \in \beste(S) \vert \sigma(\recursepara) \neq a \land \sigma(\recursepara) \neq d_{\starmax}) 
		\cdot \prob(\sigma(\recursepara) \neq a \land \sigma(\recursepara) \neq d_{\starmax})\\
		&\phantom{=} + \prob(\{a, d_{\starmax}\} \in \beste(S) \vert \sigma(\recursepara) = d_{\starmax})\prob(\sigma(\recursepara) = d_{\starmax}) \\
		&\phantom{=} + \prob(\{a, d_{\starmax}\} \in \beste(S) \vert \sigma(\recursepara) = a)\prob(\sigma(\recursepara) = a) \\
		&= \frac{1}{\recursepara} \sum_{i \in I \setminus \{d_{\starmax}\}} \pmatchbesti{\recursepara - 1}(\hrinputpara{d_{i}}) + \frac{1}{\recursepara}f(I,x)(1 - \pmatchedi{\recursepara - 1}(\hrinputpara{d_{\starmax}})) 
		+ \frac{1}{\recursepara}f(I,x) \\
		&= \frac{1}{\recursepara}\sum_{i \in I \setminus \{d_{\starmax}\}} \pmatchbesti{\recursepara - 1}(\hrinputpara{d_{i}}) - \frac{f(I,x)}{\recursepara}\pmatchedi{\recursepara - 1}(\hrinputpara{d_{\starmax}}) 
		 + \frac{2f(I,x)}{\recursepara} 
	\end{align*}
	In addition, it holds that $\pmatchbesti{2}(S^{x}_{\{d_i\}, J}) = f(\{a, d_i\},x)$ since if $\beste$ matches in this case, then it matches the optimal edge.

	Next, we compute $\pmatched(\hrinput) - \pmatchbest(\hrinput)$, i.e., the probability of matching a suboptimal valuation in a star.
	Some terms will cancel out because the probabilities of matching optimally ($\pmatchbest(\hrinput)$) and matching at all ($\pmatched(\hrinput)$) only differ if the last agent to arrive is not $a$ or $d_{\starmax}$.
	\begin{align*}
		&\pmatched(\hrinput) - \pmatchbest(\hrinput)\\
		 &=  \frac{1}{\recursepara}\sum_{i \in I} \pmatchedi{\recursepara - 1}(\hrinputpara{d_{i}})  - \frac{f(I,x)}{\recursepara}\pmatchedi{\recursepara - 1}(\hrinputpara{d_{\starmax}}) + \frac{2f(I,x)}{\recursepara} \\
		& - \frac{1}{\recursepara}\sum_{i \in I \setminus \{d_{\starmax}\}} \pmatchbesti{\recursepara - 1}(\hrinputpara{d_{i}}) + \frac{f(I,x)}{\recursepara}\pmatchedi{\recursepara - 1}(\hrinputpara{d_{\starmax}}) 
		 - \frac{2f(I,x)}{\recursepara}  \\
		&= \frac{1}{\recursepara}\sum_{i \in I} \pmatchedi{\recursepara - 1}(\hrinputpara{d_{i}}) - \frac{1}{\recursepara}\sum_{i \in I \setminus \{d_{\starmax}\}} \pmatchbesti{\recursepara - 1}(\hrinputpara{d_{i}}) \\
		&= \frac{1}{\recursepara}\pmatchedi{\recursepara - 1}(\hrinputpara{d_{\starmax}}) + \frac{1}{\recursepara}\sum_{i \in I \setminus \{d_{\starmax}\}} \pmatchedi{\recursepara - 1}(\hrinputpara{d_{i}}) - \pmatchbesti{\recursepara - 1}(\hrinputpara{d_{i}})
	\end{align*}
	
	We can repeatedly apply the recursive equation that we just derived.
	On the right side, this amounts to summing 
	$$\frac{1}{\recursepara}\frac{(\recursepara - 1 - \lvert \tilde{I} \rvert)!\lvert \tilde{I} \rvert !}{(\recursepara - 1)!}\pmatchedi{\recursepara - 1}(S[N \setminus (\{d_{\starmax} \cup \tilde{I} )\}]) $$
	for all $\tilde{I} \subsetneq I \setminus \{d_{\starmax}\}$.
	The factor $\frac{(\recursepara - 1 - \lvert \tilde{I} \rvert)!}{(\recursepara - 1)!}$ collects the accumulated prefactors of all steps, and the factor $\lvert \tilde{I} \rvert !$ accounts for the fact that we can arrive at the same term by removing the alternatives in $\tilde{I}$ in any order.
	Finally, the remaining difference after removing all elements in $I \setminus \{d_{\starmax}\}$ is  $\pmatchedi{2}(S^{x}_{\{d_{\starmax}\}, \tilde{I}}) - \pmatchbesti{2}(S^{x}_{\{d_{\starmax}\}, \tilde{I}}) = 0$, which cancels out.
	We can rewrite $\frac{(\recursepara - 1 - \lvert \tilde{I} \rvert)!\lvert \tilde{I} \rvert !}{(\recursepara - 1)!} = \frac{1}{\binom{k - 1}{\lvert \tilde{I} \rvert}}$.
	This yields:
	\begin{equation}\label{eq:hrdiff}
		\pmatched(\hrinput) - \pmatchbest(\hrinput) = \frac{1}{\recursepara} \sum_{\tilde{I} \subsetneq I \setminus \{d_{\starmax}\}}\frac{1}{\binom{k - 1}{\lvert \tilde{I} \rvert}}\pmatchedi{\recursepara - 1}(S[N \setminus (\{d_{\starmax} \cup \tilde{I} )\}])
	\end{equation}
	
	Define $||S|| := |I| + 1$ if $S$ is a star defined by $I$, i.e., $||S|| = |I\cup \{a\}|$.
	Hence, we have that $||S|| = k$.

	We now show the lemma by strong induction over $||S||$.
	Note that $I\neq \emptyset$ and, therefore, $||S||\ge 2$ in all star instances.
	If $||S|| = 2$, then $\pmatchedi{2}(S) = \pmatchbesti{2}(S)$.
	Thus, $$\pmatchedi{2}(S) = \pmatchbesti{2}(S) \ge \cmax > \frac{1}{3} = \frac{2}{3} - \frac{2}{3 \cdot 2}\text.$$ 
	
	Now assume for all stars $S$ with $||S||\le k-1$, it holds that $\pmatched(S) > \frac{2}{3} - \frac{2}{3||S||}$.
	In the following, we use the binomial identity 
	\begin{equation}\label{eq:binid}
		\binom{n - 1}{k} = \frac{n - k}{n}\binom{n}{k}\text.
	\end{equation}
	
	Recall that $|I| = k-1$ and, therefore, $|I\setminus \{d_{\starmax}\}| = k-2$.
	We compute
	\begin{align*}
		\pmatched(\hrinput) &\overset{Eq.~(\ref{eq:hrdiff})}{=} \pmatchbest(\hrinput) + \frac{1}{\recursepara} \sum_{\tilde{I} \subsetneq I \setminus \{d_{\starmax}\}} \frac{1}{\binom{k - 1}{\lvert \tilde{I} \rvert}}\pmatchedi{\recursepara - 1}(S[N \setminus (\{d_{\starmax} \cup \tilde{I} )\}])\\
		&> \frac{1}{3} + \frac{1}{\recursepara} \sum_{\tilde{I} \subsetneq I \setminus \{d_{\starmax}\}} \frac{1}{\binom{k - 1}{\lvert \tilde{I} \rvert}}\left(\frac{2}{3} - \frac{2}{3(k - 1 - \lvert \tilde{I} \rvert)}\right)\\
		&= \frac{1}{3} + \frac{1}{\recursepara} \sum_{i = 0}^{k - 3} \frac{\binom{k - 2}{i}}{\binom{k - 1}{i}}\left(\frac{2}{3} - \frac{2}{3(k - 1 - i)}\right)\\
		&\overset{Eq.~(\ref{eq:binid})}{=} \frac{1}{3} + \frac{1}{\recursepara} \sum_{i = 0}^{k - 3} \frac{\binom{k - 1}{i}}{\binom{k - 1}{i}}\frac{k - 1 - i}{k - 1}\left(\frac{2}{3} - \frac{2}{3(k - 1 - i)}\right)\\
		&=\frac{1}{3} + \frac{1}{\recursepara} \sum_{i = 0}^{k - 3} \left(\frac{k - 1}{k - 1}  - \frac{i}{k - 1} \right)\left(\frac{2}{3} - \frac{2}{3(k - 1 - i)}\right)\\
		&=\frac{1}{3} + \frac{1}{\recursepara} \sum_{i = 0}^{k - 3} \left(1  - \frac{i}{k - 1} \right)\left(\frac{2}{3} - \frac{2}{3(k - 1 - i)}\right)\\
		&=\frac{1}{3} + \frac{1}{\recursepara} \sum_{i = 0}^{k - 2} \left(1  - \frac{i}{k - 1} \right)\left(\frac{2}{3} - \frac{2}{3(k - 1 - i)}\right)\text.
	\end{align*}
	
	In the last step, we inserted the term for $i = k-2$, which evaluates to~$0$ as $\frac{2}{3(k - 1 - (k-2))} = \frac 23$.
	
	We finally simplify the two parts of the equation individually.
	For the first term, we obtain
	
	\begin{align*}
		\frac{1}{\recursepara} \sum_{i = 0}^{k - 2} \left(1  - \frac{i}{k - 1} \right)\frac{2}{3} &= \frac{2}{3\recursepara}\left( \sum_{i = 0}^{k - 2} 1  - \sum_{i = 0}^{k - 2}\frac{i}{k - 1}\right)
		=\frac{2}{3\recursepara}\left(k - 1 - \frac{1}{k - 1}\frac{(k - 1)(k - 2)}{2} \right) \\
		&= \frac{2}{3\recursepara}\frac{2k - 2 - k + 2}{2} 
		= \frac{2}{3\recursepara}\frac{k}{2} = \frac{1}{3}
	\end{align*}
	
	For the second term, we obtain
	\begin{align*}
		\frac{1}{\recursepara} \sum_{i = 0}^{k - 2} \left(1  - \frac{i}{k - 1} \right)\frac{2}{3(k - 1 - i)} &= \frac{2}{3\recursepara} \sum_{i = 0}^{k - 2} \frac{1 - \frac{i}{k - 1}}{k - 1 - i}
		=\frac{2}{3\recursepara} \sum_{i = 0}^{k - 2} \frac{\frac{k - 1 - i}{k - 1}}{k - 1 - i} \\
		&= \frac{2}{3\recursepara} \sum_{i = 0}^{k - 2} \frac{1}{k - 1} 
		= \frac{2}{3\recursepara}
	\end{align*}
	
	Inserting back into our equation we get
	
	\begin{equation*}
		\pmatched(\hrinput) > \frac{1}{3} + \frac{1}{\recursepara} \sum_{i = 0}^{k - 2} \left(1  - \frac{i}{k - 1} \right)\left(\frac{2}{3} - \frac{2}{3(k - 1 - i)}\right) = \frac{1}{3} + \frac{1}{3} - \frac{2}{3\recursepara} = \frac{2}{3} - \frac{2}{3\recursepara}\text.
	\end{equation*}
	This completes the proof.
\end{proof}

\section{Full proof of Theorem~\ref{thm:dissolutionbound}}\label{app:dissolutionbound} 

Our proof of \Cref{thm:dissolutionbound} relies on a similar idea as the proof by \citet{Vara11a}, showing that there does not exist an online matching algorithm (in an edge arrival setting) operating under free dissolution for which the competitive ratio is better than $\frac 1{3+ 2 \sqrt{2}}$.
His proof relies on two steps.
First, he shows that a particular sequence of real numbers cannot exist based on a recursive set of inequalities.
Second, he shows that the existence of an algorithm with a competitive ratio of better than  $\frac 1{3+ 2 \sqrt{2}}$ implies the existence of just such a sequence.
We will use his first step as a black box and then use an adversarial instance of online FHGs to construct the sequence utilizing an online coalition formation algorithm that achieves a competitive ratio of better than $\frac 1{6+ 4 \sqrt{2}}$.
The construction of our adversarial instance is similar to the one by \citet{Vara11a}.
Still, while his optimal partition is a matching consisting of coalitions of size~$2$, we construct the instance in a way such that the optimal instance consists of coalitions that form stars (i.e., we have symmetric valuations that are equal to some constant if they involve a special center agent and are~$0$, otherwise).
This accounts for the improvement of about a factor of~$2$ in the welfare of the optimal partition.

We start by stating the lemma that captures the nonexistence of the sequence.

\begin{lemma}[\citet{Vara11a}]\label{lem:sequence}
	Let $\beta > \frac 1{3 + 2 \sqrt{2}}$. Then there exists no sequence $(x_i)_{i\in \mathbb N}$ with $x_1 = 1$ and $x_i \ge 0$ for $i\ge 2$ such that for all $i\in \mathbb N$, it holds that
	\begin{equation}
		x_i \ge \beta\left(x_{i+1} + \sum_{j=1}^{i+1}x_j\right)\text.
	\end{equation}
\end{lemma}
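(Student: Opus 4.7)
My plan is to prove the lemma by reducing to a one-dimensional monotone-map argument on the ratios $r_i := x_i / S_{i-1}$, where $S_i := \sum_{j=1}^i x_j$ (the ratio is well-defined for $i \geq 2$ since $x_1 = 1$ guarantees $S_{i-1} \geq 1$). I would first dispose of the trivial case $\beta \geq 1$ by observing that the hypothesis then forces $x_i \geq \beta S_i > x_i$ for $i \geq 1$, which is impossible. For the remaining range $\tfrac{1}{3+2\sqrt{2}} < \beta < 1$, assuming for contradiction that such a sequence exists, I would extract two ingredients from the hypothesis: the weak consequence $x_i \geq \beta S_i$ (after dropping the nonnegative term $2\beta x_{i+1}$) and the full inequality $x_i \geq \beta(2 x_{i+1} + S_i)$, which translate respectively to the lower bound $r_i \geq \beta/(1-\beta)$ and an upper bound $r_{i+1} \leq g(r_i)$ for a specific function $g$.

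Concretely, dividing the full inequality through by the positive quantity $2 \beta S_i = 2 \beta S_{i-1}(1 + r_i)$ gives $r_{i+1} \leq g(r_i)$ with
\begin{equation*}
  g(r) \;:=\; \frac{1}{2\beta}\!\left( \frac{r}{1 + r} - \beta \right).
\end{equation*}
The heart of the argument is then to identify when $g(r) < r$ holds for all positive $r$. A short manipulation (clearing the positive denominator $2\beta(1+r)$) rewrites this inequality as positivity of the quadratic $2 \beta r^2 + (3\beta - 1) r + \beta$, whose discriminant in $r$ equals $\beta^2 - 6\beta + 1$. The roots of the latter (in $\beta$) are $3 \pm 2\sqrt{2}$, and exactly in the regime $\tfrac{1}{3+2\sqrt{2}} = 3 - 2\sqrt{2} < \beta < 3 + 2\sqrt{2}$ the discriminant is strictly negative, so the quadratic is strictly positive on $\mathbb{R}$, hence $g(r) < r$ throughout $(0, \infty)$.

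To close, I would combine the two bounds. The sequence $(r_i)_{i \geq 2}$ is strictly decreasing, since $r_{i+1} \leq g(r_i) < r_i$, and bounded below by $\beta/(1-\beta) > 0$, so it converges to some limit $\ell \geq \beta/(1-\beta)$. Passing to the limit in $r_{i+1} \leq g(r_i)$ via continuity of $g$ on $(0, \infty)$ yields $\ell \leq g(\ell)$, which contradicts the strict inequality $g(\ell) < \ell$ just established.

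The main obstacle is the algebraic identification of $\tfrac{1}{3+2\sqrt{2}}$ as the pertinent threshold: the equivalence of $g(r) < r$ with positivity of the quadratic $2\beta r^2 + (3\beta - 1)r + \beta$ must be carried out carefully, since it relies on preserving the sign of $2\beta(1+r)$, and a misplaced sign would flip the whole conclusion. A secondary subtlety is ensuring the upper bound $r_{i+1} \leq g(r_i)$ is consistent with the required nonnegativity of $r_{i+1}$; happily, the lower bound $r_i \geq \beta/(1-\beta)$ turns out to be exactly equivalent to $g(r_i) \geq 0$, so no extra constraint is needed to keep the iteration within the positive reals, and the continuity-based limit step is valid without further domain checks.
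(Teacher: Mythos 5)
The paper imports this lemma verbatim from \citet{Vara11a} and does not reprove it, so there is no internal proof to compare against; I assess your argument on its own terms, and it is correct. The reduction to the scalar iteration $r_{i+1} \le g(r_i)$ with $r_i := x_i/S_{i-1}$ and $g(r) = \frac{1}{2\beta}\bigl(\frac{r}{1+r}-\beta\bigr)$, the identification of $\beta = 3 - 2\sqrt{2} = \frac{1}{3+2\sqrt{2}}$ as the threshold below which the quadratic $2\beta r^2 + (3\beta-1)r + \beta$ acquires a real root, and the monotone-bounded limit step $\ell \le g(\ell) < \ell$ are all sound. Two cosmetic remarks. First, in the preliminary case $\beta \ge 1$, the displayed chain $x_i \ge \beta S_i > x_i$ as stated fails at $i=1$ when $\beta = 1$ (there $\beta S_1 = x_1 = 1$); instantiating at $i=2$ instead, where $\beta S_2 \ge S_2 = 1 + x_2 > x_2$, closes the case cleanly. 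Second, the lower bound $r_i \ge \beta/(1-\beta)$ is not actually load-bearing in the limit step: $r_i \ge 0$ already follows from $x_i \ge 0$ and $S_{i-1} \ge 1$, $g$ is continuous on $(-1,\infty)$, and you established $g(r) < r$ for every $r$ in that range (the quadratic is positive on all of $\mathbb{R}$), so $\ell \le g(\ell) < \ell$ would already be contradictory at $\ell = 0$. Keeping the lower bound is harmless, but the remark presenting it as a needed domain-consistency check overstates its role.
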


Next, we evaluate the social welfare of a ``star'' coalition.

\begin{lemma}\label{lem:stareval}
	Let $x\in \mathbb R$. 
	Consider a set of agents $C$ such that there exists $a\in C$ with symmetric valuations $\vf(a,b) = x$ for all $b\in C\setminus \{a\}$ and $\vf(b,b') = 0$ for all $b,b'\in C\setminus \{a\}$ with $b \neq b'$.
	Then it holds that $\SW(C) = 2\frac{|C|-1}{|C|}x$.
\end{lemma}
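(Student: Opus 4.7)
The plan is a direct computation from the definition of social welfare in an FHG, splitting the sum over agents into the contribution of the center $a$ and the contributions of the leaves.

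First, I would fix the notation: write $n_C := |C|$ and expand
\[
\SW(C) = \sum_{i\in C} \uf_i(C) = \sum_{i\in C}\sum_{j\in C\setminus\{i\}} \frac{\vf_i(j)}{n_C}.
\]
Next, I would compute $\uf_a(C)$ by using that $\vf(a,b) = x$ for every $b\in C\setminus\{a\}$, which yields
\[
\uf_a(C) = \sum_{b\in C\setminus\{a\}} \frac{\vf(a,b)}{n_C} = \frac{(n_C-1)x}{n_C}.
\]
Then I would compute $\uf_b(C)$ for each leaf $b\in C\setminus\{a\}$: by symmetry $\vf(b,a)=\vf(a,b)=x$, and all other valuations $\vf(b,b')$ with $b'\in C\setminus\{a,b\}$ vanish by assumption, so
\[
\uf_b(C) = \frac{x}{n_C}.
\]

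Summing over all $|C|-1$ leaves gives a total leaf contribution of $(n_C-1)x/n_C$, which equals $\uf_a(C)$. Adding the center and leaf contributions yields
\[
\SW(C) = \frac{(n_C-1)x}{n_C} + (n_C-1)\cdot\frac{x}{n_C} = 2\,\frac{n_C-1}{n_C}\,x,
\]
as claimed. There is no real obstacle here; the only thing to be careful about is that symmetry is used when passing from $\vf(a,b)$ to $\vf(b,a)$, which is why each positive valuation ends up counted twice in $\SW(C)$ (once for each endpoint), producing the factor of $2$. The formula also correctly returns $0$ in the degenerate case $|C|=1$.
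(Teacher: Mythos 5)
Your proof is correct and follows essentially the same route as the paper: compute $\uf_a(C) = \frac{|C|-1}{|C|}x$ for the center, $\uf_b(C) = \frac{1}{|C|}x$ for each leaf, and sum. The paper's version is just terser; you spell out the symmetry step explicitly, which is fine.
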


\begin{proof}
	Assume that we are in the lemma's situation.
	Then, $\uf_a(C) = \frac{|C|-1}{|C|}x$, and for all $b\in C\setminus\{a\}$, it holds that $\uf_b(C) = \frac 1{|C|}x$.
	The assertion follows by summing up utilities.
\end{proof}

We are ready to prove our theorem.

\dissolutionbound*
	
	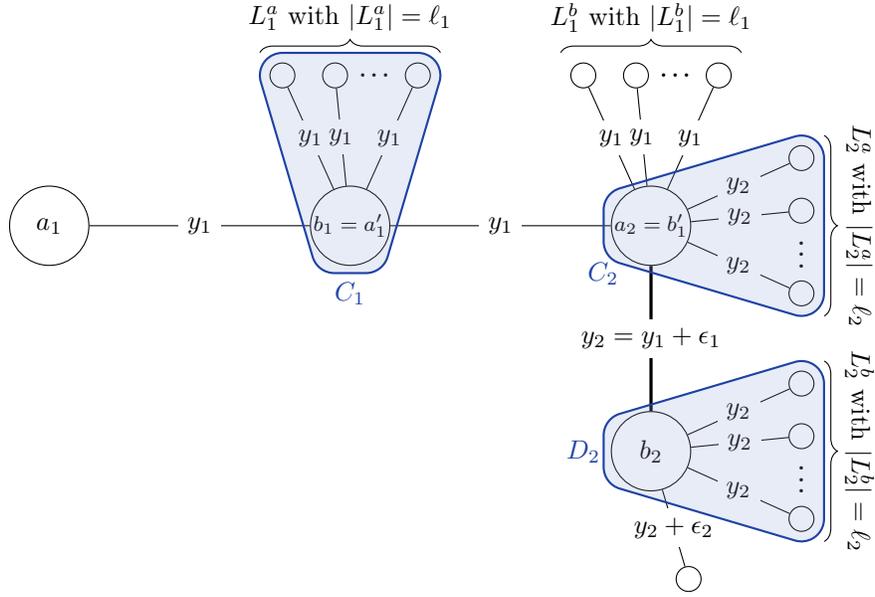
\begin{figure*}
		\centering
		\begin{tikzpicture}
			\node[agent] (a1) at (0,0) {$a_1$};
			\node[agent] (b1) at (4,0) {};
			\node at (b1) {\footnotesize $b_1 = a'_1$};
			\node[agent] (a2) at (8,0) {};
			\node at (a2) {\footnotesize $a_2 = b'_1$};
			\node[agent] (b2) at (8,-3) {$b_2$};
			
			\node at ($(b1) + (0,-.9)$) {\color{CoalitionColor}{$C_1$}};
			\node at ($(a2) + (225:.9)$) {\color{CoalitionColor}{$C_2$}};
			\node at ($(b2) + (180:.9)$) {\color{CoalitionColor}{$D_2$}};
			
			\draw (a1) edge node[midway,fill = white] {$y_1$} (b1);
			\draw (a2) edge node[midway,fill = white] {$y_1$} (b1);
			\draw (a2) edge[very thick] node[midway,fill = white] {$y_2 = y_1 + \epsilon_1$} (b2);
			
			\node[aux] (l1) at ($(b1) + (-.9,2)$) {};
			\node[aux] (l2) at ($(b1) + (-.2,2)$) {};
			\node[aux] (l3) at ($(b1) + (.9,2)$) {};
			\node at ($(l2)!.5!(l3)$) {\footnotesize \textbf{\dots}};
			\draw [decorate,decoration={brace,amplitude=5pt,raise=2ex}]
			  ($(l1)-(.3,0)$) -- ($(l3) + (.3,0)$) node[midway,yshift=2.2em]{$L_1^a$ with $|L_1^a| = \ell_1$};
			  
			\foreach \i in {1,2,3}
			{\draw (b1) edge node[midway,fill = white] {$y_1$} (l\i);}
			
			\node (b12) at ($(b1)-(-.21,.33)$) {};
			\node (b11) at ($(b1)-(.21,.33)$) {};
			\draw[thick,CoalitionColor, fill=CoalitionColor!50, fill opacity=0.2]  \convexpath{l1,l3,b12,b11}{.3cm};
			
			\node[aux] (l1) at ($(a2) + (-.9,2)$) {};
			\node[aux] (l2) at ($(a2) + (-.2,2)$) {};
			\node[aux] (l3) at ($(a2) + (.9,2)$) {};
			\node at ($(l2)!.5!(l3)$) {\footnotesize \textbf{\dots}};
			\draw [decorate,decoration={brace,amplitude=5pt,raise=2ex}]
			  ($(l1)-(.3,0)$) -- ($(l3) + (.3,0)$) node[midway,yshift=2.2em]{$L_1^b$ with $|L_1^b| = \ell_1$};
			
			\foreach \i in {1,2,3}
			{\draw (a2) edge node[midway,fill = white] {$y_1$} (l\i);}
			
			\node[aux] (l1) at ($(a2) + (2,.9)$) {};
			\node[aux] (l2) at ($(a2) + (2,.2)$) {};
			\node[aux] (l3) at ($(a2) + (2,-.9)$) {};
			\node[rotate = 90] at ($(l2)!.5!(l3)$) {\footnotesize \textbf{\dots}};
			\draw [decorate,decoration={brace,amplitude=5pt,raise=2ex}]
			  ($(l1)+(0,.3)$) -- ($(l3)-(0,.3)$) node[midway,xshift=2.2em, rotate = 270]{$L_2^a$ with $|L_2^a| = \ell_2$};
			
			\foreach \i in {1,2,3}
			{\draw (a2) edge node[midway,fill = white] {$y_2$} (l\i);}
			
			\node (a22) at ($(a2)-(.33,-.21)$) {};
			\node (a21) at ($(a2)-(.33,.21)$) {};
			\draw[thick,CoalitionColor, fill=CoalitionColor!50, fill opacity=0.2]  \convexpath{l1,l3,a21,a22}{.3cm};
			
			\node[aux] (l1) at ($(b2) + (2,.9)$) {};
			\node[aux] (l2) at ($(b2) + (2,.2)$) {};
			\node[aux] (l3) at ($(b2) + (2,-.9)$) {};
			\node[rotate = 90] at ($(l2)!.5!(l3)$) {\footnotesize \textbf{\dots}};
			\draw [decorate,decoration={brace,amplitude=5pt,raise=2ex}]
			  ($(l1)+(0,.3)$) -- ($(l3)-(0,.3)$) node[midway,xshift=2.2em, rotate = 270]{$L_2^b$ with $|L_2^b| = \ell_2$};
			
			\node[aux] (l4) at ($(b2) + (.5,-1.7)$) {};
			\draw (b2) edge node[midway,fill = white] {$y_2 + \epsilon_2$} (l4);
			
			\foreach \i in {1,2,3}
			{\draw (b2) edge node[midway,fill = white] {$y_2$} (l\i);}
			
			\node (b22) at ($(b2)-(.33,-.21)$) {};
			\node (b21) at ($(b2)-(.33,.21)$) {};
			\draw[thick,CoalitionColor, fill=CoalitionColor!50, fill opacity=0.2]  \convexpath{l1,l3,b21,b22}{.3cm};

		\end{tikzpicture}
		\caption{Illustration of the construction in the proof of \Cref{thm:dissolutionbound} for an exemplary algorithm $\alg$.
		We display all positive valuations.
		The remaining valuations within the leaf sets $L_1^a$, $L_1^b$, $L_2^a$, and $L_2^b$ are zero, and all other valuations are large negative numbers.
		We start with two agents, $a_1$ and $b_1$.
		We first attempt to dispatch a set $L_1^b$ of leaves towards $b_1$.
		However, our algorithm might immediately decide to dissolve $\{a_1,b_1\}$ and create a new coalition $\{a'_1,b'_1\}$.
		We then might be able to have all the leaf agents in $L_1^a$ and $L_1^b$ arrive.
		This completes the first part of Phase~$1$.
		Now, we start the second part, in which we subsequently increment the valuations. 
		$\alg$ might decide to immediately dissolve $\{a'_1,b'_1\}$ when the next agent arrives.
		This defines agents $a_2$, $b_2$, and coalition $C_1$.
		We start with Phase~$2$.
		In the first part, the leaf agents $L_2^a$ and $L_2^b$ might arrive without further interruption.
		Now assume that $\alg$ would dissolve $\{a_2,b_2\}$ when the next agent arrives (their edge is indicated in bold).
		This would give rise to the definition of $C_2$ and $D_2$, and we would obtain an inequality for $y_2$ by comparing with the guarantee for the coalition structure containing the nonempty coalitions $C_1$, $C_2$, and $D_2$.
		\label{fig:dissolutionexample}
		} 
	\end{figure*}
	
\begin{proof}
	Let $c := \frac 1{6 + 4 \sqrt{2}}$. Assume for contradiction that $\alg$ is an online coalition formation algorithm operating under free dissolution that achieves a competitive ratio of $\gamma > c$ for symmetric FHGs.
	Without loss of generality, we may assume that $\frac c{\gamma}$ is rational.\footnote{Indeed, otherwise, we can just perform the proof for a $\gamma'$ in the open interval $(c,\gamma)$ with this property. Such a $\gamma'$ exists as the function $f:[c,\gamma] \to \mathbb R$, $f(x) = \frac cx$ is continuous and hence, by the density of the rational numbers in the real numbers, attains rational numbers in the open interval $(c,\gamma)$.
	}
	We want this property to assure that all instances we construct exclusively use rational numbers.

	Let 
	\begin{equation}\label{eq:betadef}
		\beta := 2 \left(c + \frac 12\left(\gamma - c\right)\right) = \gamma + c\text,
	\end{equation}
	i.e., it holds that $\beta > 2c = \frac 1{3+ 2 \sqrt{2}}$.
	We will eventually derive a contraction to \Cref{lem:sequence} by constructing a sequence for this~$\beta$.
	
	We construct an adversarial instance for this algorithm by constructing a symmetric graph $G = (N,\vf)$, i.e., we specify the symmetric weights underlying the valuations of an FHG.

	The construction maintains the property that the algorithm's current partition can only contain a single coalition with positive welfare and that coalition contains exactly two agents.
	The adversarial instance is constructed in a sequence of phases, where in every phase, we grow star-like structures around each of the endpoints of the currently maintained nonsingleton coalition.
	In the first part of Phase~$i$, we achieve a star with $\ell_i$ leaves, while the algorithm does not change the matched edges.
	In the second part of Phase~$i$, we iteratively increase the weight on the edges of the stars by $\epsilon_i$ until the algorithm changes the matched edge.
	This has to happen eventually because the algorithm achieves a bounded competitive ratio.
	
	We now specify the two parameters of the construction.
	For $i\in \mathbb N$, define 
	\begin{equation}\label{eq:epsdef}
		\epsilon_i := \frac{\gamma -c}{2\gamma}2^{-i}\quad\text{and}\quad \ell_i := \left\lceil \frac {1-\epsilon_i}{\epsilon_i}\right\rceil\text.
	\end{equation}
	
	Note that, since $\frac c{\gamma}$ is rational, $\epsilon_i = \left(\frac 12 - \frac c{\gamma}\right)2^{-i}$ is also rational.
	Moreover, the definition of $\ell_i$ immediately implies that
	\begin{equation}\label{eq:lbound}
		\frac{\ell_i}{\ell_i + 1}\ge 1-\epsilon_i\text.
	\end{equation}
	
	We now specify the instance.
	Our whole construction is illustrated in \Cref{fig:dissolutionexample}.
	
	The first two agents that arrive are $a_1$ and $b_1$ such that $\vf(a_1,b_1) = 1$.
	Clearly, $\alg$ has to form the coalition $\{a_1,b_1\}$ as otherwise, its competitive ratio would be unbounded.
	For $i\ge 1$, at the beginning of Phase~$i$, there is a single coalition with nonzero welfare containing precisely agents $a_i$ and $b_i$.
	
	Moreover, throughout the execution of the instance, all arriving agents will have a positive (mutual) valuation for precisely one agent---one of the agents that presently is in a coalition of positive welfare---, a zero valuation for some agents, and a large negative valuation for all other agents.
	In particular, the second agent in the coalition of positive welfare yields a large negative valuation, and thus, joining this coalition leads to an overall negative welfare, which cannot be performed by any algorithm with a positive competitive ratio.
	Hence, the new agent only forms a coalition of positive welfare if the previously existing coalition with positive welfare is dissolved.
	
	Now let $i\ge 1$ and assume that we are at the beginning of Phase~$i$, i.e., so far $\alg$ has constructed a partition containing a single coalition with positive welfare containing $a_i$ and $b_i$.
	We set 
	\begin{equation}
		y_i := \vf(a_i,b_i)\text.
	\end{equation}
	In the first part of Phase~1, we want to guarantee that at the end of this part, there is a single coalition of positive welfare
	$C = \{a'_i,b'_i\}$ such that for each of $a'_i$ and $b'_i$, $\ell_i$ agents have arrived such that there are $0$-valuations among these agents and a valuation of $y_i$ towards $a'_i$ or $b'_i$.
	In other words, the instance contains a bi-star as a substructure where all edges weigh $y_i$.
	
	We start by setting $a'_i := a_i$ and $b'_i := b_i$. 
	Now, we let arrive a set $L_i^b$ of up to $\ell_i$ agents that have a valuation of $y_i$ for $b_i$, $0$ for already arrived agents in $L_i^b$, and a sufficiently large negative valuation for all other agents, e.g., a negative value larger in absolute value than the sum of positive valuations of already existing agents.
	As we argued before, the only way that $\alg$ puts an agent in $L_i^b$ into a coalition of positive welfare is if the coalition of $a'_i$ and $b'_i$ is dissolved and the new agent forms a coalition with $b'_i$.
	In this case, we update agent labels: $b'_i$ becomes the new $a'_i$, and the newly arrived agent is the new $b'_i$.
	
	We repeat this until $\ell_i$ agents have arrived.
	Note that this has to happen at some point as we would otherwise have a path of unbounded length with edge weights equal to $y_i$, which would give rise to a partition of social welfare more than $\frac 1{\gamma} y_i$, a contradiction.
	
	Now, we repeat the same procedure with $a'_i$: we let arrive a set $L_i^a$ of up to $\ell_i$ agents that have a valuation of $y_i$ for $a_i$, $0$ for already arrived agents in $L_i^a$, and a sufficiently large negative valuation for all other agents.
	If the algorithm decides to dissolve $\{a'_i,b'_i\}$ to form a coalition of $a'_i$ with a newly arrived agent, we update agent labels: $a'_i$ stays the new $a'_i$, and the newly arrived agent is the new $b'_i$.
	Note that this part must eventually end with all $\ell_i$ agents having arrived.
	Otherwise, we have an unbounded number of agents that at some point had the role of $b'_i$, and each of them can form a coalition with an agent in their set $L_i^b$, which yields unbounded welfare.
	
	We reach the end of the first part of Phase~$i$ and have established a pair of agents $\{a'_i,b'_i\}$ together with their sets $L_i^a$ and $L_i^b$.
	Note that the coalitions $\{a'_i\}\cup L_i^a$ and $\{b'_i\}\cup L_i^b$ are ``star'' coalitions as in the prerequisites of \Cref{lem:stareval}.
	
	We now start the second part of Phase~$i$.
	In this part, further agents arrive that are new leaves to $a'_i$ and $b'_i$.
	Compared to $y_i$, the weight on their connecting edges is increased by $\epsilon_i$.
	We continue until for each of  $a'_i$ and $b'_i$, further sets of $\ell_i$ agents have arrived.
	This means that we have now grown starts with a slightly larger value.
	We repeat the same with increasingly larger valuations in further increments of $\epsilon_i$. 
	The second part of Phase~$i$ ends once throughout this procedure, the edge $\{a'_i,b'_i\}$ gets dissolved. 
	
	We now formalize this idea.
	Set $L_i^{a,0} := L_i^a$ and $L_i^{b,0} := L_i^b$.
	We proceed as follows until the algorithm dissolves a coalition and forms a new coalition of positive welfare.
	For each $j\ge 1$, once all agents in the sets $L_i^{a,j-1}$ and $L_i^{b,j-1}$ have arrived, we proceed as follows.
	We let a set $L_i^{a,j}$ with $\ell_i$ agents arrive that have a valuation of $y_i + j\epsilon_i$ for $a_i$, $0$ for already arrived agents in $L_i^{a,j}$, and a sufficiently large negative valuation for all other agents.
	These agents arrive one by one, so the phase can end before all agents in $L_i^{a,j}$ have arrived.
	Then we let a set $L_i^{b,j}$ with $\ell_i$ agents arrive that have a valuation of $y_i + j\epsilon_i$ for $b_i$, $0$ for already arrived agents in $L_i^{b,j}$, and a sufficiently large negative valuation for all other agents.
	
	Note that this part also has to terminate at some point as otherwise agents with an unbounded valuation arrive, leading to a partition of welfare higher than $\frac 1{\gamma} y_i$.
	
	Once the algorithm forms a new coalition---say this happens when the $j^*$th sets of agents arrive---we distinguish two cases:
	If $a'_i$ remains in a nonsingleton coalition with the new agent $z$, we define
	$C_i := \{b'_i\}\cup L_i^{b,j^*-1}$ and $D_i := \{a'_i\}\cup L_i^{a,j^*-1}$ and set $a_{i+1} = a'_i$ and $b_{i+1} = z$.
	Otherwise, if $b'_i$ remains in a nonsingleton coalition with the new agent $z$, we define
	$C_i := \{a'_i\}\cup L_i^{a,j^*-1}$ and $D_i := \{b'_i\}\cup L_i^{b,j^*-1}$ and set $a_{i+1} = b'_i$ and $b_{i+1} = z$.
	
	Then, the new agents $a_{i+1}$ and $b_{i+1}$ are the only agents in a coalition of positive welfare $y_{i+1} = \vf(a_{i+1},b_{i+1})$.
	Moreover, $C_i$ and $D_i$ are ``star'' coalitions that are disjoint from all previous coalitions $C_k$ for $k < i$ and where all nonzero valuations are $y_{i+1} - \epsilon_i$.
	By \Cref{lem:stareval}, we obtain
	\begin{equation}\label{eq:stareval}
		\SW(C_i) = \SW(D_i) = 2\frac{\ell_i}{\ell_i+1}(y_{i+1}-\epsilon_i)\text.
	\end{equation}
	
	Consider the partition $\pi_i$ containing the coalitions $D_i$, $C_j$ for $1\le j\le i$, and singleton coalitions for all agents not contained in these.
	This coalition already exists right before the arrival of the agent such that the coalition $\{a'_i,b'_i\}$ is dissolved.
	Note that at this point, the social welfare of the partition created by $\alg$ is $y_i$, where we add $\frac {y_i}2$ for each of $a'_i$ and $b'_i$.
	Since $\alg$ is $\gamma$-competitive, we obtain
	\begin{align*}
		& y_i \ge \gamma \cdot \SW(\pi_i) = \gamma \left(\SW(D_{i}) + \sum_{j = 1}^i \SW(C_j)\right)\\
		& \overset{\text{(\ref{eq:stareval})}}{=} \gamma \left(2\frac{\ell_i}{\ell_i+1}(y_{i+1}-\epsilon_i) + \sum_{j = 1}^i 2\frac{\ell_j}{\ell_j+1}(y_{j+1}-\epsilon_j)\right)\\
		& \overset{\text{(\ref{eq:lbound})}}{\ge} \gamma \left(2(1-\epsilon_i)(y_{i+1}-\epsilon_i) + \sum_{j = 1}^i 2(1-\epsilon_j)(y_{j+1}-\epsilon_j)\right)\\
		& \ge \gamma \left(2(y_{i+1}-2y_{i+1}\epsilon_i) + \sum_{j = 1}^i 2(y_{j+1}-2y_{j+1}\epsilon_j)\right)\\
		& \ge \gamma \left(2(y_{i+1}-2y_{i+1}\epsilon_i) + \sum_{j = 1}^i 2(y_{j+1}-2y_{i+1}\epsilon_j)\right)\\
		& = 2\gamma \left(y_{i+1} + \sum_{j = 1}^i y_{j+1}\right)-2\gamma y_{i+1}\left(\epsilon_i + \sum_{j = 1}^i \epsilon_j\right)\\
		& \overset{\text{(\ref{eq:betadef}), (\ref{eq:epsdef})}}{=} (\beta + (\gamma - c)) \left(y_{i+1} + \sum_{j = 1}^i y_{j+1}\right)
		 -2\gamma y_{i+1}\left( \frac{\gamma -c}{2\gamma}2^{-i}+ \sum_{j = 1}^i \frac{\gamma -c}{2\gamma}2^{-j}\right)\\
		& \ge \beta \left(y_{i+1} + \sum_{j = 1}^i y_{j+1}\right)
		 + (\gamma -c)y_{i+1} - (\gamma - c)y_{i+1}\left(2^{-i} + \sum_{j = 1}^i 2^{-j}\right)\\
		& = \beta \left(y_{i+1} + \sum_{j = 1}^i y_{j+1}\right)\text.
	\end{align*}
	
	We obtain our desired sequence by scaling the $y_i$ and starting with $y_2$. 
	Formally,
	for $i\in \mathbb N$, we set $x_i := \frac{y_{i+1}}{y_2}$.
	Then, $x_1 = \frac {y_2}{y_2} = 1$ and for $i\ge 2$, it holds that $x_i \ge 0$.
	Moreover, for $i\ge 1$, our previous calculation implies that
	\begin{align*}
		x_i &= \frac{y_{i+1}}{y_2} \ge \frac 1{y_2}\beta \left(y_{i+2} + \sum_{j=2}^{i+2}y_j\right)= \beta\left(\frac{y_{i+2}}{y_2} + \sum_{j=2}^{i+2}\frac{y_j}{y_2}\right) \\
		& = \beta\left(x_{i+1} + \sum_{j=2}^{i+2}x_{j-1}\right)
		 = \beta\left(x_{i+1} + \sum_{j=1}^{i+1}x_{j}\right) \text.
	\end{align*} 
	Hence, we have constructed the desired sequence. 
	This is a contradiction to \Cref{lem:sequence}.
\end{proof}

\section*{Acknowledgements}
	Most of this work was done when Martin Bullinger was at the University of Oxford.
	Martin Bullinger was supported by the AI Programme of The Alan Turing Institute. 
	Ren\'{e} Romen and Alexander Schlenga were supported by the Deutsche Forschungsgemeinschaft under grant BR~2312/11--2.


\begin{thebibliography}{31}
\providecommand{\natexlab}[1]{#1}
\providecommand{\url}[1]{\texttt{#1}}
\expandafter\ifx\csname urlstyle\endcsname\relax
  \providecommand{\doi}[1]{doi: #1}\else
  \providecommand{\doi}{doi: \begingroup \urlstyle{rm}\Url}\fi

\bibitem[Aziz and Savani(2016)]{AzSa15a}
Haris Aziz and Rahul Savani.
\newblock Hedonic games.
\newblock In Felix Brandt, Vincent Conitzer, Ulle Endriss, J{\'e}r{\^o}me Lang,
  and Ariel~D. Procaccia, editors, \emph{Handbook of Computational Social
  Choice}, chapter~15. Cambridge University Press, 2016.

\bibitem[Aziz et~al.(2015)Aziz, Gaspers, Gudmundsson, Mestre, and
  T{\"a}ubig]{AGG+15b}
Haris Aziz, Serge Gaspers, Joachim Gudmundsson, Juli{\'a}n Mestre, and Hanjo
  T{\"a}ubig.
\newblock Welfare maximization in fractional hedonic games.
\newblock In \emph{Proceedings of the 24th International Joint Conference on
  Artificial Intelligence (IJCAI)}, pages 461--467, 2015.

\bibitem[Aziz et~al.(2019)Aziz, Brandl, Brandt, Harrenstein, Olsen, and
  Peters]{ABB+17a}
Haris Aziz, Florian Brandl, Felix Brandt, Paul Harrenstein, Martin Olsen, and
  Dominik Peters.
\newblock Fractional hedonic games.
\newblock \emph{ACM Transactions on Economics and Computation}, 7\penalty0
  (2):\penalty0 1--29, 2019.

\bibitem[{Badanidiyuru Varadaraja}(2011)]{Vara11a}
Ashwinkumar {Badanidiyuru Varadaraja}.
\newblock Buyback problem - approximate matroid intersection with cancellation
  costs.
\newblock In \emph{Proceedings of the 38th International Colloquium on
  Automata, Languages, and Programming (ICALP)}, pages 379--390, 2011.

\bibitem[Bil{\`o} et~al.(2015)Bil{\`o}, Fanelli, Flammini, Monaco, and
  Moscardelli]{BFF+15a}
Vittorio Bil{\`o}, Angelo Fanelli, Michele Flammini, Gianpiero Monaco, and Luca
  Moscardelli.
\newblock On the price of stability of fractional hedonic games.
\newblock In \emph{Proceedings of the 14th International Conference on
  Autonomous Agents and Multiagent Systems (AAMAS)}, pages 1239--1247, 2015.

\bibitem[Bil{\`o} et~al.(2018)Bil{\`o}, Fanelli, Flammini, Monaco, and
  Moscardelli]{BFFMM18a}
Vittorio Bil{\`o}, Angelo Fanelli, Michele Flammini, Gianpiero Monaco, and Luca
  Moscardelli.
\newblock Nash stable outcomes in fractional hedonic games: Existence,
  efficiency and computation.
\newblock \emph{Journal of Artificial Intelligence Research}, 62:\penalty0
  315--371, 2018.

\bibitem[Bogomolnaia and Jackson(2002)]{BoJa02a}
Anna Bogomolnaia and Matthew~O. Jackson.
\newblock The stability of hedonic coalition structures.
\newblock \emph{Games and Economic Behavior}, 38\penalty0 (2):\penalty0
  201--230, 2002.

\bibitem[Brandl et~al.(2015)Brandl, Brandt, and Strobel]{BBS14a}
Florian Brandl, Felix Brandt, and Martin Strobel.
\newblock Fractional hedonic games: {I}ndividual and group stability.
\newblock In \emph{Proceedings of the 14th International Conference on
  Autonomous Agents and Multiagent Systems (AAMAS)}, pages 1219--1227, 2015.

\bibitem[Brandt and Bullinger(2022)]{BrBu20a}
Felix Brandt and Martin Bullinger.
\newblock Finding and recognizing popular coalition structures.
\newblock \emph{Journal of Artificial Intelligence Research}, 74:\penalty0
  569--626, 2022.

\bibitem[Buchbinder et~al.(2018)Buchbinder, Segev, and Tkach]{BST18a}
Niv Buchbinder, Danny Segev, and Yevgeny Tkach.
\newblock Online algorithms for maximum cardinality matching with edge
  arrivals.
\newblock \emph{Algorithmica}, 81:\penalty0 1781--1799, 2018.

\bibitem[Bullinger(2020)]{Bull19a}
Martin Bullinger.
\newblock Pareto-optimality in cardinal hedonic games.
\newblock In \emph{Proceedings of the 19th International Conference on
  Autonomous Agents and Multiagent Systems (AAMAS)}, pages 213--221, 2020.

\bibitem[Bullinger and Romen(2025{\natexlab{a}})]{BuRo25a}
Martin Bullinger and Ren{\'e} Romen.
\newblock Stability in online coalition formation.
\newblock \emph{Journal of Artificial Intelligence Research}, 82:\penalty0
  2423--2452, 2025{\natexlab{a}}.

\bibitem[Bullinger and Romen(2025{\natexlab{b}})]{BuRo25b}
Martin Bullinger and Ren{\'e} Romen.
\newblock Online coalition formation under random arrival or coalition
  dissolution.
\newblock \emph{ACM Transactions on Algorithms}, 22\penalty0 (1):\penalty0
  4:1--43, 2025{\natexlab{b}}.

\bibitem[Bullinger et~al.(2024)Bullinger, Elkind, and Rothe]{BER24a}
Martin Bullinger, Edith Elkind, and J{\"o}rg Rothe.
\newblock Cooperative game theory.
\newblock In J{\"o}rg Rothe, editor, \emph{Economics and Computation: An
  Introduction to Algorithmic Game Theory, Computational Social Choice, and
  Fair Division}, chapter~3, pages 139--229. Springer, 2024.

\bibitem[Bullinger et~al.(2025)Bullinger, Chatziafratis, and Shahkar]{BCS25a}
Martin Bullinger, Vaggos Chatziafratis, and Parnian Shahkar.
\newblock Welfare approximation in additively separable hedonic games.
\newblock In \emph{Proceedings of the 24th International Conference on
  Autonomous Agents and Multiagent Systems (AAMAS)}, pages 618--426, 2025.

\bibitem[Correa et~al.(2019)Correa, D{\"u}tting, Fischer, and
  Schewior]{CDFS19a}
Jos{\'e} Correa, Paul D{\"u}tting, Felix Fischer, and Kevin Schewior.
\newblock Prophet inequalities for i.i.d. random variables from an unknown
  distribution.
\newblock In \emph{Proceedings of the 20th ACM Conference on Economics and
  Computation (ACM-EC)}, pages 3--17. ACM Press, 2019.

\bibitem[Dr{\`e}ze and Greenberg(1980)]{DrGr80a}
Jacques~H. Dr{\`e}ze and Joseph Greenberg.
\newblock Hedonic coalitions: Optimality and stability.
\newblock \emph{Econometrica}, 48\penalty0 (4):\penalty0 987--1003, 1980.

\bibitem[Ezra et~al.(2022)Ezra, Feldman, Gravin, and Tang]{EFGT22a}
Tomer Ezra, Michal Feldman, Nick Gravin, and Zhihao~Gavin Tang.
\newblock General graphs are easier than bipartite graphs: Tight bounds for
  secretary matching.
\newblock In \emph{Proceedings of the 22nd ACM Conference on Economics and
  Computation (ACM-EC)}, pages 1148--1177, 2022.

\bibitem[Feldman et~al.(2009)Feldman, Korula, Mirrokni, Muthukrishnan, and
  P{\'a}l]{FKM+09a}
Jon Feldman, Nitish Korula, Vahab Mirrokni, Shanmugavelayutham Muthukrishnan,
  and Martin P{\'a}l.
\newblock Online ad assignment with free disposal.
\newblock In \emph{Proceedings of the 5th International Conference on Web and
  Internet Economics (WINE)}, pages 374--385, 2009.

\bibitem[Ferguson(1989)]{Ferg89a}
Thomas~S. Ferguson.
\newblock Who solved the secretary problem?
\newblock \emph{Statistical Science}, 4\penalty0 (3):\penalty0 282--296, 1989.

\bibitem[Flammini et~al.(2021{\natexlab{a}})Flammini, Kodric, Monaco, and
  Zhang]{FKMZ21a}
Michele Flammini, Bojana Kodric, Gianpiero Monaco, and Qiang Zhang.
\newblock Strategyproof mechanisms for additively separable and fractional
  hedonic games.
\newblock \emph{Journal of Artificial Intelligence Research}, 70:\penalty0
  1253--1279, 2021{\natexlab{a}}.

\bibitem[Flammini et~al.(2021{\natexlab{b}})Flammini, Monaco, Moscardelli,
  Shalom, and Zaks]{FMM+21a}
Michele Flammini, Gianpiero Monaco, Luca Moscardelli, Mordechai Shalom, and
  Shmuel Zaks.
\newblock On the online coalition structure generation problem.
\newblock \emph{Journal of Artificial Intelligence Research}, 72:\penalty0
  1215--1250, 2021{\natexlab{b}}.

\bibitem[Flammini et~al.(2022)Flammini, Kodric, and Varricchio]{FKV22a}
Michele Flammini, Bojana Kodric, and Giovanna Varricchio.
\newblock Strategyproof mechanisms for friends and enemies games.
\newblock \emph{Artificial Intelligence}, 302:\penalty0 103610, 2022.

\bibitem[Gamlath et~al.(2019)Gamlath, Kapralov, Maggiori, Svensson, and
  Wajc]{GKM+19a}
Buddhima Gamlath, Michael Kapralov, Andreas Maggiori, Ola Svensson, and David
  Wajc.
\newblock Online matching with general arrivals.
\newblock In \emph{Proceedings of the 60th Symposium on Foundations of Computer
  Science (FOCS)}, pages 26 -- 37, 2019.

\bibitem[Huang et~al.(2024)Huang, Tang, and Wajc]{HTW24a}
Zhiyi Huang, Zhihao~Gavin Tang, and David Wajc.
\newblock Online matching: A brief survey.
\newblock \emph{ACM SIGecom Exchanges}, 22\penalty0 (1):\penalty0 135--158,
  2024.

\bibitem[Kaklamanis et~al.(2016)Kaklamanis, Kanellopoulos, and
  Papaioannou]{KKP16a}
Christos Kaklamanis, Panagiotis Kanellopoulos, and Konstantinos Papaioannou.
\newblock The price of stability of simple symmetric fractional hedonic games.
\newblock In \emph{Proceedings of the 9th International Symposium on
  Algorithmic Game Theory (SAGT)}, volume 9928 of \emph{Lecture Notes in
  Computer Science (LNCS)}, pages 220--232. Springer-Verlag, 2016.

\bibitem[Kesselheim et~al.(2013)Kesselheim, Radke, T{\"o}nnis, and
  V{\"o}cking]{KRTV13a}
Thomas Kesselheim, Klaus Radke, Andreas T{\"o}nnis, and Berthold V{\"o}cking.
\newblock An optimal online algorithm for weighted bipartite matching and
  extensions to combinatorial auctions.
\newblock In \emph{Proceedings of the 21st European Symposium on Algorithms
  (ESA)}, Lecture Notes in Computer Science (LNCS), pages 589--600.
  Springer-Verlag, 2013.

\bibitem[Malik et~al.(2021)Malik, Duffy, Thakur, and Breslin]{MDTB21a}
Sweta Malik, Maeve Duffy, Subhasis Thakur, and John~G. Breslin.
\newblock Fractional hedonic coalition formation game for peer to peer energy
  trading in a microgrid.
\newblock In \emph{Proceedings of the 2021 {IEEE} {M}adrid {P}ower{T}ech},
  pages 1--6. IEEE Computer Society Press, 2021.

\bibitem[McGregor(2005)]{McGr05a}
Andrew McGregor.
\newblock Finding graph matchings in data streams.
\newblock In \emph{Proceedings of the 8th International Workshop on
  Approximation Algorithms for Compinatorial Optimization Problems and 9th
  International Workshop on Randomization and Computation (APPROX \& RANDOM)},
  Lecture Notes in Computer Science (LNCS), pages 170--181, 2005.

\bibitem[Ray(2007)]{Ray07a}
Debraj Ray.
\newblock \emph{A Game-Theoretic Perspective on Coalition Formation}.
\newblock Oxford University Press, 2007.

\bibitem[Schrijver(2003)]{Schr03a}
Alexander Schrijver.
\newblock \emph{Combinatorial Optimization - Polyhedra and Efficiency}.
\newblock Springer, 2003.

\end{thebibliography}
\end{document}